\newtheorem{definition}{Definition}
\newtheorem{thm}[definition]{Theorem}
\newtheorem{cor}[definition]{Corollary}
\newtheorem{lem}[definition]{Lemma}
\newtheorem{prop}[definition]{Proposition}
\newtheorem{remark}{Remark}
\acrodef{iid}[i.i.d.]{independent identically distributed}
\acrodef{wrt}[w.r.t.]{with respect to}
\acrodef{SINR}[SINR]{signal-to-interference plus noise ratio}
\acrodef{SNR}[SNR]{signal-to-noise ratio}
\acrodef{DMC}[DMC]{discrete memoryless channel}
\acrodef{BSC}[BSC]{binary symmetric channel}
\acrodef{KKT}[KKT]{Karush-Kuhn-Tucker}
\acrodef{NPC}[NPC]{Nash-equilibrium power control}
\acrodef{SPC}[SPC]{Semi-coordinated power control}
\acrodef{CCPC}[CCPC]{Costless-communication power control}
\acrodef{CPC}[CPC]{Coded power control}
\newcommand{\eqdef}{\triangleq}
\newcommand{\calX}{\mathcal{X}}
\newcommand{\calU}{\mathcal{U}}
\newcommand{\calY}{\mathcal{Y}}
\newcommand{\calT}{\mathcal{T}}
\newcommand{\ul}{\underline}
\newcommand{\ol}{\overline}
\newcommand{\mc}{\mathcal}
\newcommand{\tc}{\textcolor}
\newcommand{\ds}{\displaystyle}
\newcommand{\idest}{{i.e.,} }
\newcommand{\eg}{e.g., }
\title{Coordination in Distributed Networks via Coded Actions with
Application to Power Control}
 \author{
 \IEEEauthorblockN{Benjamin~Larrousse, Samson~Lasaulce, and Matthieu~R.~Bloch\\}
   \IEEEauthorblockA{\{larrousse, lasaulce\}@lss.supelec.fr, matthieu.bloch@ece.gatech.edu}
     \thanks{This paper was presented in part at the 2013 IEEE International Symposium on Information Theory~\cite{Larrousse-isit2013}, the 2013 International Workshop on Stochastic Methods in Game Theory \cite{lasaulce-erice-2013}, and the 12th IEEE International Symposium on Modeling and Optimization in Mobile, Ad Hoc, and Wireless Networks in 2014~\cite{larrousse-wiopt-2014}. The work of B. Larrousse and S. Lasaulce was supported in part by the French agency ANR through the project LIMICOS - ANR-12-BS03-0005. The work of M. Bloch was supported in part by NSF under award CCF-1320304.}}
     \date{\today}
\begin{document}
\maketitle

\begin{abstract}
This paper investigates the problem of coordinating several agents through their actions, focusing on an asymmetric observation structure with two agents. Specifically, one agent knows the past, present, and future realizations of a state that affects a common payoff function, while the other agent either knows the past realizations of nothing about the state. In both cases, the second agent is assumed to have strictly causal observations of the first agent's actions, which enables the two agents to coordinate. These scenarios are applied to distributed power control; the key idea is that a transmitter may embed information about the wireless channel state into its transmit power levels so that an observation of these levels, \eg the signal-to-interference plus noise ratio, allows the other transmitter to coordinate its power levels. The main contributions of this paper are twofold. First, we provide a characterization of the set of feasible average payoffs when the agents repeatedly take long sequences of actions and the realizations of the system state are \acs{iid}. Second, we exploit these results in the context of distributed power control and introduce the concept of coded power control. We carry out an extensive numerical analysis of the benefits of coded power control over alternative power control policies, and highlight a simple yet non-trivial example of a power control code.
\end{abstract}

\begin{IEEEkeywords} Channels with state; coding theorems; coordination; distributed power control; distributed resource allocation; game theory; information constraints; interference channel; optimization.

\end{IEEEkeywords}

\section{Introduction}
\label{sec:introduction}

The main technical problem studied in this paper is the following. Given an integer $N\geq 1$, three discrete alphabets $\calX_0$, $\calX_1$, $\calX_2$, and a \emph{stage payoff} function $w:\calX_0\times\calX_1\times\calX_2\rightarrow \mathbb{R}$, one wants to maximize the average payoff
\begin{equation}
\label{eq:FT-intro}
W_N(x_0^N, x_1^N, x_2^N) \eqdef \frac{1}{T} \sum_{n=1}^N w(x_{0,n}, x_{1,n},x_{2,n})
\end{equation}
\ac{wrt} the sequences $x_1^N\eqdef(x_{1,1},\dots,x_{1,N})\in\calX_1^N$ and $x_2^N\eqdef(x_{2,1},\dots,x_{2,N})\in\calX_2^N$ given the knowledge of $x_0^N \eqdef (x_{0,1},\dots,x_{0,N})\in\calX_0^N$. Without further restrictions and with instantaneous knowledge of $x_{0,n}$, solving this optimization problem consists in finding one of the optimal pairs of variables $(x_{1,n}^\star, x_{2,n}^\star)$ for every $n$. The corresponding maximum value\footnote{This ideal situation is referred to as the ``costless communication'' case. In Section \ref{sec:coded-power-control}, the corresponding power control scenario is called costless communication power control (CCPC).} of $W_N$ is then
\begin{equation}
W_N^{\star} =  \frac{1}{N} \sum_{n=1}^N   \max_{x_1, x_2} w(x_{0,n}, x_1, x_2).
\end{equation}
We assume here that the variable $x_2$ cannot be controlled or optimized directly. As formally described in Section~\ref{sec:problem-statement}, the variable $x_2$ results from imperfect observations of $x_0$ through $x_1$, which induces an \emph{information constraint} in the aforementioned optimization problem. One contribution in Section~\ref{sec:information-constraints} is to precisely characterize this constraint for large $N$ when $x_0^N$ consists of \ac{iid} realizations of a given random variable $X_0$.

This setting is a special case of \emph{distributed optimization}, in which $K$ agents\footnote{In other disciplines such as computer science, control, or economics, agents are sometimes called nodes, controllers, or decision-makers.} connected via a given observation structure have the common objective of maximizing the average payoff $W_N$ for large $N$. The variable $x_k$ with $k\in\{1,\dots,K\}$ is the \emph{action} of Agent $k$ and represents the only variable under its control. The variable $x_0$ is outside of the agents' control and typically represents the realization of a random system state. The observation structure defines how the agents interact through observations of the random state and of each other's actions. The average payoff then measures the degree of \emph{coordination} between the agents, under the observation constraints of the actions imposed by the observation structure. As a concrete example, we apply this framework to power control in Section~\ref{sec:coded-power-control}, in which $x_0$ represents the global wireless channel state information and $x_k$ the power level of Transmitter $k$.

A central question is to characterize the possible values of the average payoff $W_N$ when the agents interact many times, \idest when $N$ is large. Answering this question in its full generality still appears out of reach, and the present paper settles for a special case with $K=2$ agents. Specifically, we assume that Agent 1 has perfect knowledge of the past, current, and future realizations of the state sequence $x_0^N$, while Agent 2 obtains imperfect and strictly causal observations of Agent 1's actions and possesses either strictly causal or no knowledge of the realizations of the state. Despite these restricting assumptions, one may extract valuable concepts and insights of practical relevance from the present work, which can be extended to the general case of $K\geq2$ agents and arbitrary observation structures.

\subsection{Related work}
\label{sec:related-work}
In most of the literature on agent coordination, including classical team decision problems~\cite{basar-book-2013}, agents coordinate their actions through \emph{dedicated channels}, which allow them to signal or communicate with each other without affecting the payoff function. The works most closely related to the present one are~\cite{Cuff2010,Cuff2013}, in which the authors introduce the notions of \emph{empirical and strong coordination} to measure agents' ability to coordinate their actions in a network with noiseless dedicated channels. Empirical coordination measures an average coordination over time and requires the joint empirical distribution of the actions to approach a target distribution asymptotically in variational distance; empirical coordination relates to the communication of probability distributions~\cite{KS07} and tools from rate-distortion theory. Strong coordination is more stringent and asks the distribution of \emph{sequences} of actions to be asymptotically indistinguishable from sequences of actions drawn according to a target distribution, again in terms of variational distance; strong coordination relates to the notion of channel resolvability~\cite{Han1993}. The goal is then to establish the \emph{coordination capacity}~\cite{Cuff2010}, which relates the achievable joint distributions of actions to the fixed rate constraints on the noiseless dedicated channels. The results of~\cite{Cuff2010,Cuff2013} have been extended to a variety of networks with dedicated channels~\cite{Bereyhi2013,Haddadpour2012,Bloch2013a,Bloch2014a}, and optimal codes have been designed for specific settings~\cite{Blasco-Serrano2012,Bloch-2012,Chou2015}.

Much less is known about the coordination via the actions of agents in the absence of dedicated channels, which is the main focus of the present work. The most closely related work is~\cite{Gossner-2006}, in which the authors characterize the set of possible average payoffs for two agents, assuming that each agent perfectly monitors the other agent's actions; the authors establish the set of \emph{implementable distributions}, which are the achievable empirical joint distributions of the actions under the assumed observation structure. In particular, this set is characterized by an information constraint that captures the observation structure between the agents. While~\cite{Gossner-2006} largely relies on combinatorial arguments,~\cite{Cuff-2011} provides an information-theoretic approach of coordination via actions under the name of \emph{implicit communication}. \tc{black}{Coordination via actions also relates to earlier works on encoders with cribbing~\cite{Meulen1977,Willems1985,Asnani2011}; in such models, encoders observe the output signals of other encoders, which effectively creates indirect communication channels to coordinate. Another class of relevant models in which agent actions influence communication are channels with action-dependent states~\cite{Weissman2010}, in which the signals emitted by an agent influence the state of a communication channel.}

To the best of our knowledge, the present work is the first to exploit coordination via actions for distributed resource allocation in wireless networks, and specifically here for distributed power control over an interference channel and multiple-access channels. Much of the distributed power control literature studies the performance of power control schemes using game-theoretic tools. One example is the iterative water-filling algorithm~\cite{yu-jsac-2002}, which is an instance of best-response dynamics (BRD), and is applied over a time horizon over which the wireless channel state is constant. One of the main drawbacks of the various implementations of the BRD for power control problems, see \eg~\cite{lasaulce-book-2011,zappone-comlett-2011,bacci-tsp-2013}, is that they tend to converge to \ac{NPC} policies. The latter are typically Pareto-inefficient, meaning that there exist some schemes \tc{black}{that} would allow all the agents to improve their individual utility \ac{wrt} the \ac{NPC} policies. Another drawback is that such iterative schemes do not always converge. Only restrictive sufficient conditions for convergence are known, see \eg~\cite{scutari-tsp-2009} for the case of multiple input multiple output (MIMO) interference channels, and are met with probability zero for some special cases such as the parallel multiple-access channels~\cite{mertikopoulos-jsac-2012}. In contrast, one of the main benefits of \emph{coded power control} developed in Section~\ref{sec:coded-power-control} is precisely to obtain efficient operating points for the network. This is made possible by having the transmitters exchange information about the quality of the communication links through observed quantities, such as the \ac{SINR}. The \acp{SINR} of the different users effectively act as the outputs of a channel over which transmitters communicate to coordinate their actions. A transmitter codes several realizations of the wireless channel state into a sequence of power levels, which then allows other transmitters to exploit their corresponding sequence of \acp{SINR} to select their power levels. No iterative procedure is required and convergence issues are therefore avoided. We focus our study on efficiency, and \ac{NPC} is therefore compared to coded power control in terms of average sum-rate; other aspects such as wireless channel state information availability and complexity should also be considered but are deferred to future work.

\subsection{Contributions}
\label{sec:main-contributions}

The contributions of the present work are as follows.
\begin{itemize}
\item The results in Section~\ref{sec:information-constraints} extend~\cite{Gossner-2006} by relaxing assumptions about the observation structure. While~\cite{Gossner-2006} assumes that Agent $2$ perfectly monitors the actions of Agent $1$, we consider the case of imperfect monitoring and analyze situations in which Agent $2$ has a strictly causal knowledge (Theorem~\ref{thm:upper-bound} and Corollary~\ref{thm:achiev-ISIT}) or no knowledge (Theorem~\ref{thm:achiev-GP}) of the state.
\item We clarify the connections between the game-theoretic formulation of~\cite{Gossner-2006} and information-theoretic considerations from the literature on state-dependent channels~\cite{GP-1980,Kim2008,Choudhuri2010,Choudhuri2011,Choudhuri2012a}, separation theorems, and empirical coordination~\cite{Cuff2010,Cuff2011a}. We also formulate the determination of the long-run average payoff as an optimization problem, which we study in detail in Section~\ref{sec:optimization-problem} and exploit for power control in Section~\ref{sec:coded-power-control}. 
\item We establish a bridge between the coordination via actions and power control in wireless networks. We develop a new perspective on resource allocation and control, in which designing a resource allocation with high average common payoff amounts to designing a code. Such a code has to strike a balance between sending information about the upcoming realizations of the state, to obtain high payoff in the future, and achieving a good value of the current payoff. As an illustration, we provide a complete description of a power control code for the multiple-access channel in Section~\ref{sec:coord-scheme}.
\end{itemize}


\section{Problem statement}
\label{sec:problem-statement}

For convenience, we provide a summary of the notation used throughout this paper in Table~\ref{tab:notations}.

\begin{table}[b]
\caption{Summary of notation used throughout the paper.}
\label{tab:notations}

\begin{center}
{\normalsize
\begin{tabular}{|c|l|}
\hline
\textbf{Symbol} & \textbf{Meaning}\\
\hline
$Z$  & A  generic random variable\\
$Z_i^j$ & Sequence of random variables $(Z_i,\dots,Z_j)$, $j \geq i$\\
$Z^n$ or $\ul{Z}$ & $Z_i^j$ when $i=1$ and $j=n$\\
$\mathcal{Z}$ & Alphabet of $Z$\\
$|\mathcal{Z}|$ & Cardinality of $\mathcal{Z}$\\
$\Delta(\mathcal{Z})$ & Unit simplex over $\mathcal{Z}$\\
$z$  & Realization of $Z$\\
$z^n$ or $\ul{z}$& Sequence or vector $(z_1,\dots,z_n)$\\
$\mathbb{E}_\mathrm{P}$ & Expectation operator under the probability $\mathrm{P}$\\
$H(Z)$ & Entropy of $Z$\\
$I(Y;Z)$  & Mutual information between $Y$ and $Z$\\
$Z_1 - Z_2 - Z_3$ & Markov chain $\mathrm{P}(z_1|z_2,z_3) = \mathrm{P}(z_1|z_2)$\\
$\mathbbm{1}_{ \{ . \} }$ & Indicator function\\
$\oplus$ & Modulo$-2$ addition\\
$\mathbb{R}_{+}$ & $[0, +\infty)$\\
$\calT^n_\epsilon(Q)$& $\{z^n\in\mathcal{Z}^n:\Vert T_{z^n}-Q\Vert_1<\epsilon\}$\\
$\delta(\epsilon)$& A function of $\epsilon$ such that $\lim_{\epsilon\rightarrow 0}\delta(\epsilon)=0$\\
$\mathrm{T}_{z^n}$ & Type of the sequence $z^n$\\
\hline
\end{tabular}
}
\end{center}
\end{table}



We now formally introduce the problem of interest. We consider $K=2$ agents that have to select their actions repeatedly over $N\geq1$ stages and wish to coordinate via their actions in the presence of a random state and with an observation structure detailed next. At each stage $n\in[1:N]$, the action of Agent $k \in \{1,2\}$ is $x_{k,n}\in\calX_k$ with $|\calX_k| < \infty$, while the realization of the random state is $x_{0,n}\in \calX_0$ with $|\calX_0|< \infty$. The realizations of the state are \ac{iid} according to a random variable $X_0$ with distribution $\rho_0\in\Delta(\calX_0)$. The random state does not depend on the agents' actions but affects a common payoff function\footnote{The
  function $w$ can be any function such that the asymptotic average payoffs defined in the  paper exist.} $w:\calX_0\times\calX_1\times\calX_2\rightarrow\mathbb{R}$. Coordination is measured in terms of the average payoff $W_N(x_0^N,x_1^N,x_2^N) $ as defined in \eqref{eq:FT-intro}. At every stage $n$, Agent $2$ only has access to \emph{imperfect} observations $y_n\in\mathcal{Y}$ of Agent $1$'s actions with $|\mathcal{Y}|<\infty$, which are the output of channel \tc{black}{without memory} and with transition probability
\begin{equation}
\mathrm{P}(y_n |  x_0^{n}, x_1^{n}, x_2^{n}, y^{n-1}) =
\Gamma(y_n | x_{0,n}, x_{1,n}, x_{2,n}) \label{eq:DMC}
\end{equation}
for some fixed conditional probability $\Gamma$. We consider two observation structures defined by the strategies $(\sigma_n)_{1\leq n \leq N}$ and  $(\tau_n)_{1\leq n \leq N}$ of Agents $1$ and $2$, respectively, which restrict how agents observe the state and each other's actions at all stage $n\in [1:N]$ as follows:
\begin{align}\label{eq:strategies-I}
\text{case I:}\quad&\left\{
\begin{array}{ccccc}
\sigma_n^{\mathrm{I}} & : & \calX_0^N & \rightarrow & \calX_1\\
\tau_n^{\mathrm{I}} & : & \calX_0^{n-1} \times \mathcal{Y}^{n-1} & \rightarrow & \calX_2
\end{array}
 \right.\\\label{eq:strategies-II}
\text{case II:}\quad&\left\{
\begin{array}{ccccc}
\sigma_n^{\mathrm{II}} & : & \calX_0^N & \rightarrow & \calX_1\\
\tau_n^{\mathrm{II}} & : & \mathcal{Y}^{n-1} & \rightarrow & \calX_2
\end{array}
 \right..
\end{align}
Note that the strategies differ from conventional block channel coding, since an agent acts at every stage; they may rather be viewed as joint source-channel codes with online coding and decoding. These strategies are also asymmetric since Agent 1 does not observe Agent 2's actions. Symmetric strategies, in which agents would interact, are much more involved and partial results have been recently developed in~\cite{larrousse-itw-2015}. There exist, however, many scenarios, such as cognitive radio, heterogeneous networks, interference alignment, and master-slave communications~\cite{li-icassp-2014} in which asymmetric strategies are relevant. Our objective is to characterize the set of average payoffs that are asymptotically feasible, \idest the possible values for $\lim_{N \rightarrow \infty} \frac{1}{N}\sum_{n=1}^N w(x_{0,n},x_{1,n},x_{2,n})$ under the observation structures defined through~\eqref{eq:strategies-I} and~\eqref{eq:strategies-II}. The definition of the two corresponding feasible sets is as follows.

\begin{definition}[Feasible sets of payoffs]
The feasible set of payoffs in case I is defined as
\begin{align}\label{eq:feasible-set-case-I}
\Omega^{\mathrm{I}} = \Big\{ \omega \in \mathbb{R} \ : \ \exists \, (\sigma_n^{\mathrm{I}},\tau_n^{\mathrm{I}})_{1\leq n \leq N},\, \omega  = \lim_{N \rightarrow \infty} \frac{1}{N}\sum_{n=1}^N \mathbb{E} \left[ w\left(X_{0,n},\sigma_n^{\mathrm{I}}(X_0^{N}),\tau_n^{\mathrm{I}}(X_0^{n-1}, Y^{n-1})\right)\right] \Big\}.
\end{align}
The feasible set of payoffs in case II is defined as
\begin{align}\label{eq:feasible-set-case-II}
\Omega^{\mathrm{II}} = \Big\{ \omega \in \mathbb{R} \ : \ \exists \, (\sigma_n^{\mathrm{II}}, \tau_n^{\mathrm{II}})_{1\leq n \leq N},\,  \omega  = \lim_{N \rightarrow \infty} \frac{1}{N}\sum_{n=1}^N \mathbb{E} \left[ w\left(X_{0,n},\sigma_n^{\mathrm{II}}(X_0^{N}),\tau_n^{\mathrm{II}}(Y^{n-1})\right) \right] \Big\}.
\end{align}
\end{definition}

The feasible sets of payoffs are directly related to the set of \emph{empirical coordinations} over the alphabet $  \mc{X} \eqdef \mc{X}_0\times \mc{X}_1 \times \mc{X}_2,$ defined as follows.
\begin{definition}[Type~\cite{Cover:2006:EIT:1146355}]
  Let $N\geq1$. For any sequence of realizations $z^N$ of the generic random variable $Z$, the type of $z^N$, denoted by $\mathrm{T}_{z^N}$, is the probability distribution on $\mathcal{Z}$ defined by
  \begin{align}
    \mathrm{T}_{z^N}(z) \stackrel{\triangle}{=} \frac{1}{N}\sum_{n=1}^N\mathbbm{1}_{\{z_{n}=z\}}.
    \label{def:type}
  \end{align}
\end{definition}

\begin{definition}[Empirical coordination~{\cite{Cuff2010}}]
  For $\ell \in \{\mathrm{I}, \mathrm{II}\} $, $\ol{Q}\in\Delta(\mc{X})$ is an achievable empirical coordination if there exists a sequence of strategies $(\sigma_n^\ell, \tau_n^\ell)_{1 \leq n \leq N}$ that generates, together with \tc{black}{$X_0^N$}, a sequence \tc{black}{$X^N \in\mc{X}$} such that
  \begin{align}
    \forall \epsilon>0,\quad \lim_{N\rightarrow\infty} \mathrm{P}(||\mathrm{T}_{X^N}-\ol{Q}||_1>\epsilon)=0,
  \end{align}
  \idest the distance between the histogram of a sequence of actions and $\ol{Q}$ converges in probability to $0$.
\end{definition}
Each feasible set of payoffs is the linear image of the corresponding set of empirical distributions under the expectation operator. A value $\omega$ is asymptotically feasible if there exists an achievable empirical coordination $\ol{Q}$ such that $\omega=\mathbb{E}_{\ol{Q}}[w]  = \sum_{x_0,x_1,x_2} \ol{Q}(x_0,x_1,x_2) w(x_0,x_1,x_2)$. We focus on the characterization of achievable empirical coordinations rather than the direct characterization of the feasible sets \tc{black}{of} payoffs.

\begin{remark}
  The notion of empirical coordination relates to the game-theoretic notion of implementability~\cite{Gossner-2006}. For $\ell \in \{\mathrm{I}, \mathrm{II}\}$, $\ol{Q}\in\Delta(\calX)$ is implementable if there exists a sequence of strategies $(\sigma_n^\ell, \tau_n^\ell)_{1 \leq n \leq N}$, $\ell \in \{\mathrm{I}, \mathrm{II}\}$, that induce at each stage $n$ a joint distribution
\begin{align}
  \mathrm{P}_{X_{0,n},X_{1,n},X_{2,n},Y_n} (x_0,x_1,x_2,y)\eqdef   \Gamma(y|x_0,x_1,x_2)\mathrm{P}_{X_{1,n},X_{2,n}|X_{0,n}}(x_1,x_2|x_0) \rho_0(x_0),\label{eq:form_induced_distribution}
\end{align} 
and that generate, together with the sequence $X_0^N$, the sequence $X^N \in\mc{X}$ such that
  \begin{align}
    \lim_{N\rightarrow\infty} ||\mathbb{E}(\mathrm{T}_{X^N})-\ol{Q}||_1=0,
  \end{align}
  \idest the average histogram of a sequence of actions is arbitrarily close to $\ol{Q}$. As shown in Appendix \ref{sec:proof-prop-empirical-implementable}, if $\ol{Q} \in \Delta(\mc{X})$ is an achievable empirical coordination, then it is implementable.
\end{remark}

We conclude this section by a brief discussion of the model, especially Agent $1$'s strategy in~\eqref{eq:strategies-I} and~\eqref{eq:strategies-II} that exploits non-causal knowledge of an \ac{iid} state. This assumption has been often used since the work of Gel'fand and Pinsker~\cite{GP-1980}, but we provide here additional justifications motivated by the application to power control in Section~\ref{sec:coded-power-control}. First, even if Agent~1 only knows future realizations over a limited time horizon, coordination may be significantly improved compared to conventional approaches, such as implementing single-stage game Nash equilibrium-type distributed policies~\cite{yu-jsac-2002,lasaulce-book-2011,scutari-tsp-2009,goodman-pcom-2000}. For instance, power control is typically based on a training phase and an action phase, assuming that a single channel state is known in advance; this corresponds to $N=2$ in our model and, as illustrated in Fig.~\ref{fig1111}, a simple coordination strategy is for Agent $1$ to inform Agent $2$ about the upcoming channels state during odd stages\footnote{For example, Transmitter $1$ might use a high (resp. low) power level on an odd stages to inform Transmitter $2$ that the channel is good (resp. bad) in the next even stage.} and coordinate their actions during even ones. In that context, assuming that Agent $1$ knows the state non-causally is a way to establish an upper bound on the performance all strategies with limited time horizon. Second, predicting the wireless channel state over a long time horizon has recently become realistic. For instance, the trajectory of a mobile user can be forecast~\cite{fourestie-patent-2007,olama-jasp-2006,malmirchegini-twc-2012}, which makes our approach relevant when the wireless channel state is interpreted as the variation of path loss and shadowing. References~\cite{fourestie-patent-2007,olama-jasp-2006,malmirchegini-twc-2012} also suggest that, by sampling the channel at the appropriate rate, the state is nearly \ac{iid}. Finally, note that the proposed approach also applies if the state is only \ac{iid} from block to block, where a block consists of several stages, and suggests that gains can be obtained by varying the power level from stage to stage, even if the channel is constant over a block. 

\begin{figure}[!ht]
  \begin{center}
    \includegraphics{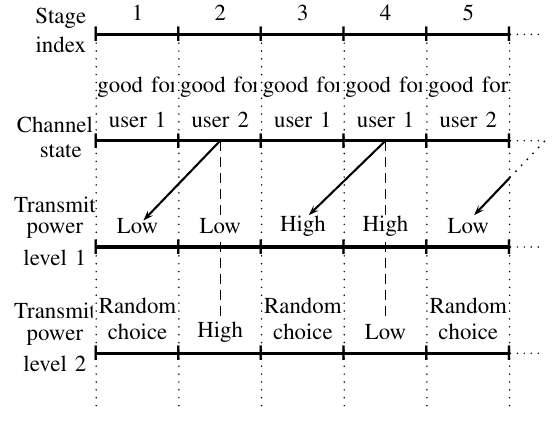}
    \caption{This figure illustrates a simple coordination scheme between two transmitters (which are the agents) in a simplified scenario inspired by~{\cite{Gossner-2006}} where the alphabets are binary: $\mc{X}_0 = \{\text{good for user }1, \text{good for user }2 \} $, $\mc{X}_1 = \{\text{low}, \text{high} \}$, $\mc{X}_2 = \{\text{low}, \text{high} \}$. The informed transmitter (\idest $1$) chooses the lowest (resp. highest) transmit power on the current stage $2t'+1$ if the upcoming wireless channel state on stage $2t'+2$ is good for user $2$ (resp. $1$). If Transmitter $2$ can perfectly retrieve the power levels of Transmitter $1$, it therefore knows the realization of the wireless channel state on stages whose index is even. It transmits at low (resp. high) power if the channel is good for user $1$ (resp. $2$). For stages whose index is odd, it chooses its power at random.}
\label{fig1111}
\end{center}
\end{figure}

\section{Information constraints on \tc{black}{achievable empirical coordination}}
\label{sec:information-constraints}

We first characterize the sets of \tc{black}{achievable empirical coordinations} $\ol{Q}\in \Delta(\calX)$ for the strategies \eqref{eq:strategies-I} and \eqref{eq:strategies-II}.  We show that these sets consist of distributions in $\Delta(\calX)$ subject to an \emph{information constraint} that captures the restrictions imposed by the observation structure. We provide a necessary condition for achievability in Theorem~\ref{thm:upper-bound} and sufficient conditions for strategies \eqref{eq:strategies-I} and \eqref{eq:strategies-II} in Theorem~\ref{thm:achiev-GP} and Corollary~\ref{thm:achiev-ISIT}, respectively.

\subsection{A necessary condition for \tc{black}{achievability}}
\label{sec:necess-cond-impl}

\begin{thm}\label{thm:upper-bound} 
 In both cases I and II, if $\ol{Q}\in\Delta(\calX)$ is an \tc{black}{achievable empirical coordination} then it must be the marginal of $Q\in\Delta(\calX\times\calY)$ factorizing as
\begin{align}
Q(x_0,x_1,x_2,y) = \Gamma(y|x_0,x_1,x_2) \ol{Q}(x_1,x_2|x_0,)\rho_0(x_0), \label{eq:factorization_Q} 
\end{align}
and satisfying the information constraint
\begin{equation}
\label{eq:upper-bound}
 I_{Q} (X_0 ; X_2) \leq I_{Q} (X_1;Y |X_0,X_2).
\end{equation}
\end{thm}

\begin{IEEEproof}
  Since the strategies of case II are special cases of strategies for case I, we derive the necessary conditions by considering strategies for case I, in which Agent 2 has causal knowledge of the state $X_0$. \tc{black}{Let $\ol{Q}\in \Delta(\calX)$ be \tc{black}{an achievable empirical coordination}. Note that 
\begin{align*}
      \mathbb{E}\left(\mathrm{T}_{X_0^N X_1^N X_2^N}(x_0,x_1,x_2)\right)&=\frac{1}{N}\sum_{n=1}^N\mathbb{E}\left(\mathbbm{1}_{\{X_{0,n}, X_{1,n}, X_{2,n}=(x_0,x_1,x_2)\}}\right)=\frac{1}{N}\sum_{n=1}^N \mathrm{P}_{X_{0,n}, X_{1,n}, X_{2,n}}(x_0,x_1,x_2),
\end{align*}
where $\mathrm{P}_{X_{0,n},X_{1,n},X_{2,n},Y_n}$ is defined in \eqref{eq:form_induced_distribution}. It follows from Appendix~\ref{sec:proof-prop-empirical-implementable} that for $\ell \in \{\mathrm{I}, \mathrm{II}\}$, there exists a pair $(\sigma_n^{\ell}, \tau_n^{\ell})_{1\leq n \leq N}$ such that for all $(x_0,x_1,x_2) \in \mathcal{X}$
\begin{align}
  \lim_{N\rightarrow\infty} \frac{1}{N} \sum_{n=1}^{N}\sum_{y\in\calY} \mathrm{P}_{X_{0,n}, X_{1,n}, X_{2,n}, Y_n}(x_0,x_1,x_2,y)  = \ol{Q}(x_0,x_1,x_2).
\end{align}
Because of the specific form of \eqref{eq:form_induced_distribution}, this also implies that
\begin{align}
\lim_{N\rightarrow\infty} \frac{1}{N} \sum_{n=1}^{N} \mathrm{P}_{X_{0,n}, X_{1,n}, X_{2,n}, Y_n}(x_0,x_1,x_2,y) = {Q}(x_0,x_1,x_2,y)\label{eq:limit_assumption}
\end{align}
with $Q$ as in \eqref{eq:factorization_Q}. }
\tc{black}{{The core of the proof consists in establishing an information constraint on the generic joint distribution $\frac{1}{N}\ds{\sum_{n=1}^N}\mathrm{P}_{X_{0,n},X_{1,n},X_{2,n},Y_n}$. We start by expressing the quantity $H(X_0^N)$ in two different manners.} {On one hand we have that}}
\tc{black}{
\begin{align}
H(X_0^N) &= I(X_0^N ; X_0^N, Y^N)\\
&= \sum_{n=1}^{N} I(X_{0,n}; X_0^N, Y^N | X_{0,n+1}^N)\\
& = \sum_{n=1}^{N} \left(I(X_{0,n}; X_0^{n-1}, Y^{n-1} | X_{0,n+1}^N) + I(X_{0,n}; X_{0,n}^{N}, Y_{n}^{N} | X_{0,n+1}^N, X_0^{n-1}, Y^{n-1})\right)\\
& \stackrel{(a)}{=}  \sum_{n=1}^{N} \left(I(X_{0,n}; X_0^{n-1}, Y^{n-1} , X_{0,n+1}^N)+ I(X_{0,n}; X_{0,n}^{N}, Y_{n}^{N} | X_{0,n+1}^N, X_0^{n-1}, Y^{n-1})\right)
\end{align}
where $(a)$ follows from the fact that the sequence $X_0^N$ is \ac{iid}. On the other hand we have that}
\begin{align}
H(X_0^N) &= I(X_0^N ; X_0^N, Y^N)\\
& =  I(X_{0,n+1}^N ; X_0^N, Y^N) + I(X_{0}^n ; X_0^N, Y^N | X_{0,n+1}^N)\\
& = \sum_{n=1}^{N} \left(I(X_{0,n+1}^N ; X_{0,n}, Y_{n} | X_0^{n-1}, Y^{n-1}) + I(X_{0}^n ; X_{0,n}, Y_{n} | X_{0,n+1}^N, X_0^{n-1}, Y^{n-1})\right).
\end{align}
\tc{black}{Since $0\leq $
  \begin{align*}
I(X_{0,n}; X_{0,n}^{N}, Y_{n}^{N} | X_{0,n+1}^N, X_0^{n-1}, Y^{n-1})  &=I(X_{0}^n; X_{0,n},Y_n, Y_{n+1}^{N} | X_{0,n+1}^N, X_0^{n-1}, Y^{n-1}) \\
    &=  I(X_{0}^n ; X_{0,n}, Y_{n} | X_{0,n+1}^N, X_0^{n-1}, Y^{n-1}),
  \end{align*}
we obtain
\begin{equation}
\sum_{n=1}^{N} I(X_{0,n}; X_0^{n-1}, Y^{n-1}, X_{0,n+1}^N) =\sum_{n=1}^{N}  I(X_{0,n+1}^N; X_{0,n}, Y_n | X_0^{n-1}, Y^{n-1}). \label{eq:constraint}
\end{equation}}
\tc{black}{Introducing the uniform random variable $Z\in\{1,\cdots,N\}$ independent of all others, we rewrite \eqref{eq:constraint} as
  \begin{equation} \label{eq:equality-lb-ub}
I(X_{0,Z};X_{0,Z+1}^N, X_0^{Z-1}, Y^{Z-1}|Z) =  I(X_{0,Z+1}^N; X_{0,Z}, Y_Z | X_0^{Z-1}, Y^{Z-1},Z).
  \end{equation}}
We first lower bound the left hand side of (\ref{eq:equality-lb-ub}). Since $X_{0,Z}$ is independent of $Z$ we have that
\begin{align}
  I(X_{0,Z};X_{0,Z+1}^N, X_0^{Z-1}, Y^{Z-1}|Z) = I(X_{0,Z};X_{0,Z+1}^N, X_0^{Z-1}, Y^{Z-1},Z),
\end{align}
which expands as 
  \begin{multline}
 I(X_{0,Z};X_{0,Z+1}^N, X_0^{Z-1}, Y^{Z-1},Z) =  I(X_{0,Z};X_0^{Z-1}, Y^{Z-1},Z)\\ +   I(X_{0,Z};X_{0,Z+1}^N,   | X_0^{Z-1}, Y^{Z-1},Z).\label{eq:expansion_1}
\end{multline}
\tc{black}{By definition, $X_{2,Z}$ is a function of $(Z,X_{0}^{Z-1},Y^{Z-1})$. Consequently,
\begin{align}
      I(X_{0,Z};X_0^{Z-1}, Y^{Z-1},Z) =     I(X_{0,Z};X_{2,Z}, X_0^{Z-1}, Y^{Z-1},Z) \geq  I(X_{0,Z};X_{2,Z}).
\end{align}}
\tc{black}{This gives us the desired lower bound for the left term of (\ref{eq:equality-lb-ub}). We now upper bound for the right hand side of (\ref{eq:equality-lb-ub}) as follows. Using a chain rule, we have
\begin{multline}\label{eq:last-term}
I(X_{0,Z+1}^N; X_{0,Z}, Y_Z | X_0^{Z-1}, Y^{Z-1},Z) = 
I(X_{0,Z+1}^N; X_{0,Z} | X_0^{Z-1}, Y^{Z-1},Z) + \\
I(X_{0,Z+1}^N;  Y_Z | X_0^{Z-1}, X_{0,Z}, Y^{Z-1},Z). 
\end{multline}}
\tc{black}{The last term of (\ref{eq:last-term}) is upper bounded as 
\begin{multline}
  I(X_{0,Z+1}^N; Y_Z | X_0^{Z-1},X_{0,Z},  Y^{Z-1},Z)\\
  \begin{split}
    &=   H(Y_Z | X_0^{Z-1},X_{0,Z},  Y^{Z-1},Z) - H(Y_Z | X_0^{Z-1},X_{0,Z}, X_{0,Z+1}^N Y^{Z-1},Z) \\
    &\stackrel{(b)}{=}H(Y_Z | X_{2,Z},X_0^{Z-1},X_{0,Z},  Y^{Z-1},Z) - H(Y_Z | X_{1,Z},X_{2,Z},X_0^{Z-1},X_{0,Z}, X_{0,Z+1}^N Y^{Z-1},Z) \\
    &\leq H(Y_Z | X_{0,Z}, X_{2,Z}) - H(Y_Z | X_{0,Z}, X_{1,Z},X_{2,Z})\\
    &= I(X_{1,Z};Y_{Z}| X_{0,Z}, X_{2,Z}). \label{eq:upper_bound_rhs}
  \end{split}
\end{multline}
where $(b)$ holds because $X_{2,Z}$ is a function of $(Z,X_{0}^{Z-1},Y^{Z-1})$ and $X_{1,Z}$ is a function of $(Z,X_{0}^{N})$; the inequality follows because conditioning reduces entropy and the Markov chain $(Z, X_0^{Z-1},X_{0,Z+1}^N, Y^{Z-1})-(X_{0,Z},X_{1,Z},X_{2,Z})-Y_{Z}$ deduced from from (\ref{eq:DMC}). By combining~\eqref{eq:equality-lb-ub}-\eqref{eq:upper_bound_rhs}, we find that
\begin{align*}
   I(X_{0,Z};X_{2,Z})\leq I(X_{1,Z};Y_{Z}| X_{0,Z} X_{2,Z}).
\end{align*}}
\tc{black}{To conclude the proof, note that the joint distribution of $X_{0,Z}$, $X_{1,Z}$, $X_{2,Z}$, and $Y_{Q}$, is exactly the distribution $\frac{1}{N}\ds{\sum_{n=1}^N}
 \mathrm{P}_{X_{0,n},X_{1,n},X_{2,n},Y_n}$ and let us introduce function $\Phi^{\mathrm{I}}$
\begin{equation}
\begin{array}{cccc}
\Phi^{\mathrm{I}}: & \Delta(\calX\times\calY)&\rightarrow& 
\mathbb{R}\\
 & Q &\mapsto &I_{Q} (X_0 ; X_2) - I_{Q} (X_1;Y |X_0,X_2)
\end{array},
\label{eq:def-Phi-I}
\end{equation}}
which is continuous. \tc{black}{Because of~\eqref{eq:limit_assumption}, $\forall \varepsilon'>0$ there exists $N'$ such that $\forall N \geq N'$,
\begin{align}
\Phi^{\mathrm{I}}(Q) \leq \Phi^{\mathrm{I}}\left(\frac{1}{N}\sum_{n=1}^N \mathrm{P}_{X_{0,n},X_{1,n},X_{2,n},Y_n}\right)+\varepsilon'.\label{eq:continuity_Q}
\end{align}}
\end{IEEEproof}
Theorem \ref{thm:upper-bound} has the following interpretation. Agent $2$'s actions are represented by $X_2$ and correspond to a joint source-channel decoding operation with distortion on the information source represented by $X_0$. To be achievable, the distortion rate must not exceed the transmission rate allowed by the channel, whose input and output are represented by Agent $1$'s action $X_1$ and the signal $Y$ observed by Agent $2$. Therefore, the pair $S=(X_0,X_2)$ plays the same role as the side information in state-dependent channels~\cite{NetworkInformationTheory}. Although we exploit this interpretation when establishing sufficient conditions for \tc{black}{achievability} in Section~\ref{sec:implementability}, the argument seems inappropriate to show that the sufficient conditions are also necessary. In contrast to classical arguments in converse proofs for state-dependent channels~\cite{GP-1980,Merhav2003}, in which the transmitted ``message'' is independent of the channel state, here the role of the message is played by the quantity $X_0^N$, which is not independent of $S^N=(X_0^N,X_2^N)$.

%
\subsection{Sufficient conditions for \tc{black}{achievability}}
\label{sec:implementability}
We start by addressing the special case of distributions $\ol{Q} \in\Delta(\calX)$ with marginal $\rho_0\in\Delta(\calX_0)$, for which the distribution $Q(x_0,x_1,x_2,y) = \Gamma(y|x_0,x_1,x_2)  \ol{Q}(x_0,x_1,x_2) $ satisfies $I_Q(X_1;Y|X_0X_2)=0$. By Theorem~\ref{thm:upper-bound}, such $\ol{Q}$ is an achievable empirical coordination only if $I_Q(X_0;X_2)=0$, so that $\ol{Q}$ factorizes as $\ol{Q}(x_1|x_0,x_2)\rho_0(x_0)\ol{Q}(x_2)$. This distribution is trivially achievable by time-sharing between strategies in which: i) Agent 2 plays a fixed action $x_2$; ii) Agent 1 generates actions according to $\ol{Q}(x_1|x_0,x_2)$; iii) playing each strategy with fixed $x_2$ a fraction $\ol{Q}(x_2)$ of the time. Hence, we now focus on distributions $\ol{Q}$ for which $I_Q(X_1;Y|X_0X_2)>0$

We now characterize \tc{black}{achievable empirical coordination} for the observation structure of case II in~\eqref{eq:strategies-II}. 

\begin{thm} \label{thm:achiev-GP} Consider the observation structure in case $\mathrm{II}$. Let $U$ be a random variable whose realizations lie in the alphabet $\mathcal{U}$, $|\mathcal{U}| < \infty$. Let $\ol{Q} \in\Delta(\calX)$ be with marginal $\rho_0\in\Delta(\calX_0)$. If $Q\in\Delta(\calX\times\calY\times\calU)$ defined as
\begin{align}
  Q(x_0,x_1,x_2,y,u) =\mathrm{P}(u|x_0,x_1,x_2) \Gamma(y|x_0,x_1,x_2) \ol{Q}(x_0,x_1,x_2)\label{eq:factorization_case_II}
\end{align}
verifies the constraint
\begin{equation}
\label{eq:lower-bound-II}
 I_{Q} (X_0 ; X_2) < I_{Q} (U;Y,X_2) - I_Q(U;X_0,X_2),
\end{equation}
then $\ol{Q}$ is \tc{black}{an achievable empirical coordination.}.
\end{thm}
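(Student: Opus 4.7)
The plan is to adapt the block-Markov scheme that established Theorem~\ref{thm:achiev-ISIT} to the more restrictive case~II, in which Agent~2 has no access to the system state. The key structural change is that the channel code used in each block must now be a Gel'fand-Pinsker code rather than a code for a channel with state known causally at both ends: from Agent~2's viewpoint, the state $X_0$ is unknown, and only the component $X_2$ of the effective channel state is available (as its own action, and inductively correctly decoded). The auxiliary variable $U$ in the theorem statement plays exactly the standard Gel'fand-Pinsker role of absorbing the non-causal knowledge of $(X_0,X_2)$ at Agent~1.

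Operationally, I would partition the horizon into $B$ blocks of length $n$, keep the same source codebooks $\mc{C}_s^{(b)}=\{\ul{x}_2^{(b)}(i)\}_{i=1}^{\lceil 2^{nR}\rceil}$ generated i.i.d.\ from $Q_{X_2}$, and select $i_b$ so that $(\ul{x}_0^{(b+1)},\ul{x}_2^{(b+1)}(i_b))\in\calT_{\epsilon_1}^n(Q_{X_0X_2})$. The covering lemma guarantees success as long as
\begin{equation}
R > I_Q(X_0;X_2)+\delta(\epsilon_1).
\end{equation}
For the channel code in block $b$, I would generate, for every message $i\in\{1,\dots,\lceil 2^{nR}\rceil\}$, a bin of $\lceil 2^{n\widetilde{R}}\rceil$ auxiliary sequences $\ul{u}^{(b)}(i,\ell)$ drawn i.i.d.\ from the marginal $Q_U$. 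Using its knowledge of $(\ul{x}_0^{(b)},\ul{x}_2^{(b)}(i_{b-1}))$, Agent~1 searches in bin $i_b$ for an index $\ell_b$ such that $(\ul{u}^{(b)}(i_b,\ell_b),\ul{x}_0^{(b)},\ul{x}_2^{(b)}(i_{b-1}))\in\calT_{\epsilon_1}^n(Q_{UX_0X_2})$, then generates $\ul{x}_1^{(b)}$ symbol-by-symbol from $Q_{X_1|UX_0X_2}$ (consistent with the factorization \eqref{eq:factorization_case_II}). By the covering lemma, this step succeeds once
\begin{equation}
\widetilde{R} > I_Q(U;X_0,X_2)+\delta(\epsilon_1).
\end{equation}
Agent~2 decodes by looking for the unique pair $(\widehat{i}_b,\widehat{\ell}_b)$ such that $(\ul{u}^{(b)}(\widehat{i}_b,\widehat{\ell}_b),\ul{x}_2^{(b)}(\widehat{i}_{b-1}),\ul{y}^{(b)})\in\calT_{\epsilon_2}^n(Q_{UX_2Y})$; the packing lemma yields a vanishing decoding-error probability provided $R+\widetilde{R}<I_Q(U;Y,X_2)-\delta(\epsilon_2)$. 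Combining the two bounds, the pair of rates is feasible precisely when $I_Q(X_0;X_2)<I_Q(U;Y,X_2)-I_Q(U;X_0,X_2)$, i.e.\ under the hypothesis \eqref{eq:lower-bound-II}.

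The first block is handled as in the proof of Theorem~\ref{thm:achiev-ISIT}: I would introduce an auxiliary distribution $\widehat{Q}$ obtained from $Q$ by forcing $X_2$ to a constant $x_2^*$ for which $I_{\widehat{Q}}(U;Y,X_2)-I_{\widehat{Q}}(U;X_0,X_2)>0$, use it to run an uncoordinated Gel'fand-Pinsker transmission of $i_1$ over $\alpha n$ symbols with a different rate $\widehat{R}$, and pick $\alpha=\lceil R/\widehat{R}\rceil$ and $B$ as in \eqref{eq:choice_B_case1} so that the rate penalty vanishes. Then the analog of Lemma~\ref{lmc:bound_norm_1} reduces the global empirical-coordination event to the union of per-block error events; after bounding each by the covering and packing lemmas above, a random-coding argument yields at least one deterministic code for which $\mathrm{P}(\lVert \mathrm{T}_{\ul{X}_0^T\ul{X}_1^T\ul{X}_2^T}-\ol{Q}\rVert_1\geq\epsilon)\to 0$, proving achievability.

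The main obstacle, beyond the bookkeeping inherited from case~I, is the Gel'fand-Pinsker decoding step: the ``side information'' $X_2$ at the decoder is in fact the reconstruction $\ul{x}_2^{(b)}(\widehat{i}_{b-1})$ from the previous block, not the nominal $\ul{x}_2^{(b)}(i_{b-1})$ used by Agent~1 for encoding. One therefore has to propagate typicality inductively, conditioning on the event $\{\widehat{i}_{b-1}=i_{b-1}\}$ (together with the complements of the covering and packing error events at blocks $1,\dots,b-1$), and verify that the packing lemma still applies with the correct joint distribution $Q_{UX_2Y}$. This induction, combined with the need to keep the auxiliary $\widehat{Q}$ in the first block compatible with the main scheme, is the most delicate part of the argument; once it is controlled, the rate calculus above closes the proof.
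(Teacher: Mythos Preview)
Your proposal is correct and follows essentially the same approach as the paper: the same block-Markov structure with source codebooks drawn from $Q_{X_2}$, Gel'fand-Pinsker binning of auxiliary $U$-sequences with the covering and packing constraints $R>I_Q(X_0;X_2)$, $\widetilde{R}>I_Q(U;X_0,X_2)$, $R+\widetilde{R}<I_Q(U;Y,X_2)$, the same treatment of the first block via an auxiliary $\widehat{Q}$ with $X_2$ fixed to a constant $x_2^*$, and the same inductive conditioning on $\{\widehat{i}_{b-1}=i_{b-1}\}$ to propagate typicality. The paper's proof uses one more nested typicality parameter ($\epsilon_1<\epsilon_2<\epsilon_3$ rather than two) to separate the covering and packing steps, but otherwise your scheme and error analysis coincide with theirs.
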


\begin{IEEEproof}
Consider a distribution $Q_{X_0X_1X_2YU}\in\Delta(\calX\times\calY\times \mc{U})$ that satisfies~\eqref{eq:factorization_case_II} \tc{black}{and~\eqref{eq:lower-bound-II}}. We denote by $Q_{U X_0 X_2}$, $Q_{U}$, $\overline{Q}_{X_0X_1X_2}$, $Q_{X_0X_2}$, and $Q_{X_2}$ the resulting marginal distributions. 

The crux of the proof is to design strategies from a block-Markov coding scheme that operates over $B$ blocks of $m\geq 1$ actions each. As illustrated in Table~\ref{tab:tablethGP}, in every block $b\in[1:B-1]$, Agent 1 communicates to Agent 2 the actions that Agent 2 should play in block $b+1$. This is possible by restricting the actions played by Agent 2 in each block $b$ to a codebook of actions $\{\ul{x}_2(i_b):i_b\in[1:2^{mR}]\}$, so that Agent 1 \tc{black}{only has} to communicate the index $i_{b}$ to be played in the next block. The problem then essentially reduces to a joint source-channel coding problem over a state-dependent channel, for which in every block $b$:
\begin{itemize}
\item the state is known non-causally by Agent 1, as per the observation structure in~\eqref{eq:strategies-II};
\item Agent 1 communicates with Agent 2 over a state-dependent \tc{black}{discrete channel without memory} and with transition probability $\Gamma(y|x_0,x_1,x_2)$;
\item the channel state consists of state sequence $\ul{x}_0^{(b)}$ and action sequence $\ul{x}_2(\widehat{i}_{b-1})$, where $\widehat{i}_{b-1}$ is the index decoded by Agent 2 at the end of block $b-1$. Agent 1 only knows $\ul{x}_2(i_{b-1})$ but the effect of using $\widehat{i}_{b-1}$ in place of $i_{b-1}$ is later proved to be asymptotically negligible; 
\item the action sequence communicated is $\ul{x}_2(i_{b})$, chosen to be empirically coordinated with the state sequence $\ul{x}_0^{(b+1)}$ of block $b+1$;
\item $i_b$ is encoded through Gel'fand-Pinsker coding into an action sequence $\ul{x}_1^{(b)}$, chosen to be empirically coordinated with $(\ul{x}_0^{(b)},\ul{x}_2(i_{b-1}))$.
\end{itemize}
Intuitively, $R$ must be sufficiently large so that one may find a codeword $\ul{x}_2(i_b)$ coordinated with any state sequence $\ul{x}_0^{(b+1)}$; simultaneously, $R$ must be small enough to ensure that the index $i_b$ is reliably decoded by Agent 2 after transmission over the channel $\Gamma(y|x_0,x_1,x_2)$. The formal analysis of these conditions, which we develop next, establishes the result. 
\smallskip

\begin{table}[h]
  \centering
\caption{Encoding and decoding used in the proof of Theorem~\ref{thm:achiev-GP}}
\label{tab:tablethGP}
  \begin{tabular}[]{c|cccccc}
    Block                     &1&2&$\cdots$&$b$&$\cdots$&$B$\\\hline
    Message                &$i_1$&$i_2$&$\cdots$&$i_b$&$\cdots$&$i_B$\\
    State                     &$\ul{x}_0^{(1)}$&$\ul{x}_0^{(2)}$&$\cdots$&$\ul{x}_0^{(b)}$&$\cdots$&$\ul{x}_0^{(B)}$\\
    Agent 2 action      &$\ul{x}_2^*$&$\ul{x}_2(\widehat{i}_{1})$&$\cdots$&$\ul{x}_2(\widehat{i}_{b-1})$&$\cdots$&$\ul{x}_2(\widehat{i}_{B-1})$\\
    Agent 1 codeword      &$\ul{u}_1(i_{1},j_1)$&$\ul{u}(i_{2},j_2)$&$\cdots$&$\ul{u}(i_{b},j_b)$&$\cdots$&$\ul{u}(i_{B}j_B)$\\ 
    Agent 1 action      &$\ul{x}_1^{(1)}$&$\ul{x}_1^{(2)}$&$\cdots$&$\ul{x}_1^{(b)}$&$\cdots$&$\ul{x}_1^{(B)}$\\ 
    Agent 2 decoding &$\widehat{i}_1$&$\widehat{i}_2$&$\cdots$&$\widehat{i}_b$&$\cdots$&$\widehat{i}_B$\\\hline
  \end{tabular}
\end{table}

Unlike the block-Markov schemes used, for instance, in relay channels, in which all nodes may agree on a \emph{fixed message} in the first block at the expense of a small rate loss, the first block must be dealt with more carefully. In fact, we may have to account for an ``uncoordinated'' transmission in the first block, in which Agent 1 may not know the actions $\ul{x}_2$ of Agent 2 and is forced to communicate at rate $\widehat{R}$ that differs from the rate $R$ used in subsequent blocks. To characterize $\widehat{R}$, we introduce another joint distribution $\widehat{Q}$ that factorizes as
\begin{align}
  \label{eq:factorization_q_hat_II}
  \widehat{Q}(x_0,x_1,x_2,y,u) = \Gamma(y|x_0,x_1,x_2) \mathrm{P}(u|x_0x_1x_2)\ol{Q}(x_1|x_0x_2)\rho_0(x_0) \mathbbm{1}_{\{x_2=x_2^*\}}
\end{align}
and differs from $\ol{Q}$ in that $X_0$ is independent of $X_2$, which is a constant. Assume that $I_{\widehat{Q}} (U;Y,X_2) - I_{\widehat{Q}}(U;X_0,X_2)=0$ for all $P$ and $x_2^*$. In particular, for $U=X_0$, we obtain $I_{\widehat{Q}} (X_0;Y,X_2) - I_{\widehat{Q}}(X_0;X_0,X_2)=0$; this is equivalent to $H_{\widehat{Q}}(X_0|YX_2)=0$, so that $x_0$ must be a function of $y$ and $x_2^*$. For $U=X_1$, we also obtain $I_{\widehat{Q}} (X_1;Y,X_2) - I_{\widehat{Q}}(X_1;X_0,X_2)=0$, which using the previously established fact leads to $I_{\widehat{Q}} (X_1;Y|X_0X_2)=0$. Then, for all $(x_0,x_2)\in\calX_0\times\calX_2$, it must be that $I_{\widehat{Q}}(X_1;Y|X_0=x_0,X_2=x_2)=0$ and therefore  $I_{{Q}}(X_1;Y|X_0=x_0,X_2=x_2)=0$. Consequently, $I_Q(X_1;Y|X_0X_2)=0$, which we have excluded from the analysis. 
Hence, we can assume that there exist $P$ and $x_2^*$ such that $I_{\widehat{Q}} (U;Y,X_2) - I_{\widehat{Q}}(U;X_0,X_2)>0$.

Now, let $\epsilon>0$. Let $R>0$, $R'>0$, $\widehat{R}>0$, $\widehat{R}'>0$, $\frac{\epsilon}{2}>\epsilon_3>\epsilon_2> \epsilon_1>0$ be real numbers and $m \geq 1$ to be specified later. Define 
\begin{align}
  \alpha&\eqdef \max\left(   \left\lceil   \frac{R}{\widehat{R}} \right\rceil,1\right)\label{eq:choice_alpha_case1}\\
  B&\eqdef \left\lceil 1+\alpha \left(\frac{4}{\epsilon}-1\right)\right\rceil.\label{eq:choice_B_case1}
\end{align}
Intuitively, $\alpha$ measures the rate penalty suffered from the uncoordinated transmission at rate $\widehat{R}$ in the first block. The choice of $B$ merely ensures that $\frac{2\alpha}{B-1+\alpha}\leq\frac{\epsilon}{2}$, as exploited later.
\smallskip

\emph{Source codebook generation for $b=1$.} Choose $x_2^*$ such that $I_{\widehat{Q}} (U;Y,X_2) - I_{\widehat{Q}}(U;X_0,X_2)>0$. The actions of Agent 2 are $\ul{x}_2^*$ consisting of $m$ repetitions of $x_2^*$ and revealed to both agents.\smallskip

\emph{Source codebooks generation for $b\in[2:B+1]$.} Randomly and independently generate $2^{m R}$ sequences according to $\ds{\Pi_{n=1}^m} Q_{X_2}(x_{2,n})$, label them $\ul{x}_2(i_b)$ with $i_b\in[1:2^{mR}]$ and reveal them to both agents.\smallskip

\emph{Channel codebook generation for $b=1$.} Randomly and independently generate $2^{\alpha m(\widehat{R}'+\widehat{R})}$ sequences according to $\ds{\Pi_{n=1}^m} \widehat{Q}_{U}(u_{n})$, label them $\ul{u}(i_1,j_1)$ with $i_1\in[1:2^{\alpha m\widehat{R}}]$ and $j_1\in[1:2^{\alpha m\widehat{R}'}]$, and reveal them to both agents.\smallskip

\emph{Channel codebook generation for $b\in[2:B]$.} Randomly and independently generate $2^{m(R'+R)}$ sequences according to $\ds{\Pi_{n=1}^m} Q_{U}(u_{n})$, label them $\ul{u}(i_b,j_b)$ with $i_b\in[1:2^{mR}]$ and $j_b\in[1:2^{mR'}]$, and reveal them to both agents.\smallskip

\emph{Source encoding at Agent 1 in block $b\in[1:B]$.} At the beginning of block $b$, Agent 1 uses its non-causal knowledge of the state $\ul{x}_0^{(b+1)}$ in the next block $b+1$ to look for an index $i_b$ such that $({\ul{x}_0^{(b+1)}, \ul{x}_2(i_b)}) \in\mc{T}_{\epsilon_1}^m(Q_{X_0X_2})$. If there is more than one such index, it chooses the smallest among them, otherwise it chooses $i_b=1$.\smallskip

\emph{Channel encoding at Agent 1 in block $b=1$.} Agent 1 uses its knowledge of $(\ul{x}_0^{(1)},\ul{x}_2^*)$ to look for an index $j_1$ such that 
\begin{align} 
  \left({\ul{u}(i_1,j_1),\ul{x}_0^{(1)}, \ul{x}_2^*}\right) \in\mc{T}_{\epsilon_2}^m(\widehat{Q}_{UX_0X_2})
\end{align}
If there is more than one such index, it chooses the smallest among them, otherwise it chooses $j_1=1$. Finally, Agent 1 generates a sequence $\ul{x}_1^{(1)}$ by passing the sequences $\ul{u}(i_1,j_1)$, $\ul{x}_0^{(1)}$, and $\ul{x}_2^*$ through a \tc{black}{channel without memory and} with transition probability $\widehat{Q}_{X_1|UX_0X_2}$, and transmits it.\smallskip

\emph{Channel encoding at Agent in block $b\in[2:B]$.} Agent 1 uses its knowledge of $(\ul{x}_0^{(b)},\ul{x}_2(i_{b-1}))$ to look for an index $j_b$ such that 
\begin{align} 
  \left({\ul{u}(i_b,j_b),\ul{x}_0^{(b)}, \ul{x}_2(i_{b-1})}\right) \in\mc{T}_{\epsilon_2}^m(Q_{UX_0X_2})
\end{align}
If there is more than one such index, it chooses the smallest among them, otherwise it chooses $j_b=1$. Finally, Agent 1 generates a sequence $\ul{x}_1^{(b)}$ by passing the sequences $\ul{u}(i_b,j_b)$, $\ul{x}_0^{(b)}$, and $\ul{x}_2(i_{b-1})$ through a \tc{black}{channel without memory and} with transition probability $Q_{X_1|UX_0X_2}$, and transmits it.\smallskip

\emph{Decoding at Agent 2 in block $b=1$.} At the end of block $1$, Agent $2$ observes the sequence of channel outputs $\ul{y}^{(1)}$ and knows its sequence of actions $\ul{x}_2^*$  in block $1$. Agent 2 then looks for a pair of indices $(\widehat{i}_{1},\widehat{j}_1)$ such that
 \begin{align}
\left({\ul{u}(\widehat{i}_1,\widehat{j}_1), \ul{y}^{(1)}, \ul{x}_2^*}\right)\in\mc{T}_{\epsilon_3}^m(\widehat{Q}_{UYX_2}). 
\end{align}
If there is none or more than one such index, Agent 2 sets $\widehat{i}_1=\widehat{j}_1=1$.\smallskip

\emph{Channel decoding at Agent 2 in block $b\in[2:B]$.} At the end of block $b$, Agent $2$ observes the sequence of channel outputs $\ul{y}^{(b)}$ and knows its sequence of actions $\ul{x}_2(\widehat{i}_{b-1})$ in block $b$. Agent 2 then looks for a pair of indices $(\widehat{i}_{b},\widehat{j}_b)$ such that
 \begin{align}
\left({\ul{u}(\widehat{i}_b,\widehat{j}_b), \ul{y}^{(b)}, \ul{x}_2(\widehat{i}_{b-1})}\right)\in\mc{T}_{\epsilon_3}^m(Q_{UYX_2}). 
\end{align}
If there is none or more than one such index, Agent 2 sets $\widehat{i}_b=\widehat{j}_b=1$.\smallskip

\emph{Source decoding at Agent 2 in block $b\in[1:B]$}. Agent 2 transmits $\ul{x}_2(\widehat{i}_{b-1})$, where $\widehat{i}_{b-1}$ is its estimate of the message transmitted by Agent 1 in the previous block $b-1$, with the convention that $\ul{x}_2(\widehat{i}_0)=\ul{x}_2^*$.\smallskip

\emph{Analysis.} We prove that $\overline{Q}$ is an achievable empirical coordination. We therefore introduce the event
\begin{align}
  E\triangleq \{(\ul{X}_0^N,\ul{X}_1^N,\ul{X}_2^N)\notin \calT_\epsilon^N(\overline Q)\}
\end{align}
with $N=mB$ and we proceed to show that $\mathrm{P}(E)$ can be made arbitrarily small for $n$ and $B$ sufficiently large and a proper choice of the rates  $R$, $R'$, $\widehat{R}$, and $\widehat{R}'$. We start by introducing the following events..
\begin{align*}
  E_0&\triangleq \{(I_1,J_1)\neq (\widehat{I}_1,\widehat{J}_1)\}\\
\forall b\in[1:N]\quad  E_1^{(b)}&\triangleq \{(\ul{X}_0^{(b+1)},\ul{x}_2(i'_b))\notin \calT^m_{\epsilon_1}(Q_{X_0X_2}) \,\forall \, i'_b\in [1:2^{mR}]\}\\
  E_2^{(b)}&\triangleq \{(\ul{u}^{(b)}(I_b,j'_b),\ul{X}_0^{(b)},\ul{x}_2(I_{b-1}))\notin \calT^m_{\epsilon_2}(Q_{UX_0X_2}) \,\forall \, j'_b \in[1:2^{mR'}]\}\\
  E_3^{(b)}&\triangleq \left\{(\ul{u}(I_b,J_b),\ul{X}_0^{(b)},X_1^{(b)},\ul{x}_2^{(b)}(\widehat{I}_{b-1}),\ul{Y}^{(b)}) \notin \calT^m_{\epsilon_3}(Q_{UX_0X_1X_2Y})\right\}\\
  E_4^{(b)}&\triangleq \left\{(\ul{u}(i'_b,j'_b),\ul{x}_2(\widehat{I}_{b-1}),\ul{Y}^{(b)}) \in \calT^m_{\epsilon_3}(Q)\text{ for some $(i'_b,j'_b)\neq (I_b,J_b)$}\right\}.
\end{align*}
We start by developing an upper bound for $\Vert \mathrm{T}_{\ul{x}_0^N\ul{x}_1^N\ul{x}_2^N}-\overline{Q}\Vert_1$, whose proof can be found in Appendix~\ref{app:theo-GP}.
\begin{lem}
  \label{lmc:bound_norm_1} We have that
  \begin{align}
  \Vert \mathrm{T}_{\ul{x}_0^N\ul{x}_1^N\ul{x}_2^N}-\overline{Q}\Vert_1 \leq \frac{2\alpha}{B-1+\alpha}+\frac{1}{B-1} \sum_{b=2}^{B} \Vert \mathrm{T}_{\ul{x}_0^{(b)}\ul{x}_1^{(b)}\ul{x}_2^{(b)}}-\overline{Q}\Vert_1.
\end{align}
\end{lem}
Recalling the choice of $B$ in~\eqref{eq:choice_B_case1}, we therefore have
\begin{align}
  \mathrm{P}(E)&= \mathrm{P}\left(\Vert \mathrm{T}_{\ul{X}_0^N\ul{X}_1^N\ul{X}_2^N}-\overline{Q}\Vert_1\geq \epsilon\right)\\
    &\leq \mathrm{P}\left(\frac{1}{B-1}\sum_{b=2}^{B}\Vert \mathrm{T}_{\ul{X}_0^{(b)}\ul{X}_1^{(b)}\ul{X}_2^{(b)}}-\overline{Q}\Vert_1\geq \frac{\epsilon}{2}\right)\\
    &\leq \mathrm{P}\left(\Vert   \mathrm{T}_{\ul{X}_0^{(b)}\ul{X}_1^{(b)}\ul{X}_2^{(b)}}-\overline{Q}\Vert_1\geq \frac{\epsilon}{2}\text{ for some } b\in[2:B]\right)\displaybreak[0]\\
    &\leq \mathrm{P}\left(E_0\cup E_1^{(1)}\bigcup_{b=2}^{(b)}\left( E_1^{(b)}\cup E_2^{(b)} \cup E_3^{(b)}\cup E_4^{(b)}\right)\right) \displaybreak[0]\\
  &\leq \mathrm{P}(E_0)+\sum_{b=1}^{B}\mathrm{P}(E_1^{(b)}) +\sum_{b=2}^{B}\mathrm{P}(E_2^{(b)}|E_1^{(b-1)c})\nonumber\\
  &\qquad+\sum_{b=2}^{B}\mathrm{P}(E_3^{(b)}\cap E_1^{(b-1)c}\cap E_2^{(b-1)c}\cap E_3^{(b-1)c} \cap E_4^{(b-1)c} \cap E_0^c)\nonumber\\
  &\qquad +\sum_{b=2}^{B}\mathrm{P}(E_4^{(b)}\cap E_2^{(b-1)c} \cap E_3^{(b-1)c} \cap E_4^{(b-1)c} \cap E_0^c)
\end{align}
As proved in Appendix~\ref{app:theo-GP}, the following lemmas show that all the averages over the random codebooks of the terms above vanish as $n\rightarrow\infty$.

\begin{lem}
  \label{lm:lemma_case_2_event_0}
  If $\widehat{R}>I_{\widehat{Q}}(U;X_0X_2)+\delta(\epsilon_2)$ and $\widehat{R}+\widehat{R}'< I_{\widehat{Q}}(U;YX_2)-\delta(\epsilon_3)$, then
  \begin{align}
    \lim_{n\rightarrow\infty}E\left(\mathrm{P}(E_0)\right)=0.
  \end{align}
\end{lem}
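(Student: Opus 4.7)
The proof follows the classical analysis of Gel'fand-Pinsker random binning, adapted to the first block of the block-Markov construction. My plan is to decompose $E_0$ into an encoding failure event and a decoding failure event, bound each using the covering and packing lemmas respectively, and conclude by a union bound. As a preliminary observation, the high-probability event $\{(\ul{X}_0^{(1)},\ul{x}_2^{(1)}(i_0))\in\calT^n_{\epsilon_1}(\widehat{Q}_{X_0X_2})\}$ holds by the \ac{iid} nature of $X_0$ and the initial choice of $\ul{x}_2^{(1)}(i_0)$, and will be implicitly assumed in what follows.

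First, I would introduce the encoding failure event
\begin{align*}
  E_0^{(\mathrm{enc})} \eqdef \left\{ (\ul{u}^{(1)}(I_1,j),\ul{X}_0^{(1)},\ul{x}_2^{(1)}(i_0))\notin\calT^n_{\epsilon_2}(\widehat{Q}_{UX_0X_2}) \text{ for all } j\right\}.
\end{align*}
The covering lemma, applied to the $2^{\alpha n\widehat{R}'}$ candidate codewords $\ul{u}^{(1)}(I_1,\cdot)$ drawn \ac{iid} from $\widehat{Q}_U$, ensures that $\mathbb{E}(\mathrm{P}(E_0^{(\mathrm{enc})}))\to 0$ under the first hypothesis of the lemma (the precise rate bookkeeping involves absorbing the factor $\alpha$ arising from the first-block normalization into the $\delta(\epsilon_2)$ correction). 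Conditionally on $E_0^{(\mathrm{enc})c}$, the selected codeword $\ul{u}^{(1)}(I_1,J_1)$ is jointly $\widehat{Q}_{UX_0X_2}$-typical with $(\ul{X}_0^{(1)},\ul{x}_2^{(1)}(i_0))$; applying the conditional typicality (Markov) lemma to the memoryless channel $\mathrm{P}(u|x_0,x_1,x_2)\,\Gamma(y|x_0,x_1,x_2)$ used to produce $\ul{X}_1^{(1)}$ and $\ul{Y}^{(1)}$, the full quintuple lies in $\calT^n_{\epsilon_3}(\widehat{Q})$ with probability approaching $1$. In particular, the correct pair $(I_1,J_1)$ passes the decoder's typicality test with high probability, so the only remaining source of error is the selection of a spurious pair $(i,j)\neq(I_1,J_1)$.

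Next, I would invoke the packing lemma to bound the probability of a spurious decoding. Since the codewords $\{\ul{u}^{(1)}(i,j)\}_{(i,j)\neq(I_1,J_1)}$ are generated \ac{iid} from $\widehat{Q}_U$ independently of $(\ul{Y}^{(1)},\ul{x}_2^{(1)}(i_0))$, and the number of such candidate pairs is at most $2^{\alpha n(\widehat{R}+\widehat{R}')}$, the probability that some such pair is jointly $\widehat{Q}_{UYX_2}$-typical with $(\ul{Y}^{(1)},\ul{x}_2^{(1)}(i_0))$ vanishes under the second hypothesis of the lemma. A union bound over the three sub-events (preliminary typicality failure, encoding failure, spurious decoding) then yields $\mathbb{E}(\mathrm{P}(E_0))\to 0$, as desired.

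The main subtlety, in my view, is the careful bookkeeping of the factor $\alpha=\lceil R/\widehat{R}\rceil$ and of the $\epsilon_i$'s in the various $\delta(\epsilon_i)$ corrections: the first block uses a per-symbol rate normalization different from that of the subsequent blocks, and the relevant auxiliary distribution is $\widehat{Q}$ (for which $X_0$ and $X_2$ are independent) rather than $Q$. The two rate conditions can be satisfied simultaneously by a suitable choice of $\widehat{R}$ and $\widehat{R}'$ precisely because the existence of $\widehat{Q}$ with $I_{\widehat{Q}}(U;Y,X_2)-I_{\widehat{Q}}(U;X_0,X_2)>0$ is already guaranteed by the construction described in the proof of Theorem~\ref{thm:achiev-GP}.
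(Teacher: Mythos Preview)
Your approach is correct and essentially identical to the paper's own: the paper merely states that ``the proof of this result is similar to that of the following Lemmas, using the `uncoordinated' distribution $\widehat{Q}$ instead of $Q$'' and omits all details, so your decomposition into a covering step for $E_0^{(\mathrm{enc})}$, a conditional-typicality step, and a packing step for spurious pairs is exactly what the paper intends.

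Two small points. First, your remark about ``absorbing the factor $\alpha$ into the $\delta(\epsilon_2)$ correction'' is unnecessary: since block $1$ has length $\alpha n$ and the number of bin codewords is $2^{\alpha n\widehat{R}'}$, the per-symbol rate entering the covering and packing lemmas is simply $\widehat{R}'$ (respectively $\widehat{R}+\widehat{R}'$), and $\alpha$ cancels cleanly. Second, the covering step you invoke actually constrains $\widehat{R}'$ (the number of bins per message), not $\widehat{R}$, so the condition it yields is $\widehat{R}'>I_{\widehat{Q}}(U;X_0X_2)+\delta(\epsilon_2)$; the ``first hypothesis'' as written in the lemma statement has $\widehat{R}$ in place of $\widehat{R}'$, which appears to be a typographical slip in the paper (compare Lemma~\ref{lm:lemma_case_2_event_2}, where the analogous covering condition is on $R'$). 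Your argument is sound once this is corrected.
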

\begin{lem}
  \label{lm:lemma_case_2_event_1}
  If ${R}>I_Q(X_0;X_2)+\delta(\epsilon_1)$, then for any $b\in[1:B]$
  \begin{align}
    \lim_{n\rightarrow\infty}E\left(\mathrm{P}(E_1^{(b)})\right)=0.
  \end{align}
\end{lem}
\begin{lem}
  \label{lm:lemma_case_2_event_2}
  If ${R'}>I_Q(U;X_0,X_2)+\delta(\epsilon_2)$, then for any $b\in[2:B]$
  \begin{align}
    \lim_{n\rightarrow\infty}E\left(\mathrm{P}(E_2^{(b))}|E_1^{(b-1)})\right)=0.
  \end{align}
\end{lem}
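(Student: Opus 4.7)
The plan is to recognize the statement as the standard covering (or multicoding) lemma that underlies every Gel'fand--Pinsker-style encoding step. I would first observe that the event $E_1^{(b-1)c}$ guarantees that the source encoding in block $b-1$ successfully picked an index $I_{b-1}$ with $(\ul{X}_0^{(b)}, \ul{x}_2^{(b)}(I_{b-1})) \in \calT_{\epsilon_1}^n(Q_{X_0 X_2})$. The failure event $E_2^{(b)}$ concerns the subsequent channel encoding step, which searches over $j \in \{1,\dots,2^{nR'}\}$ for an index making $\bigl(\ul{u}^{(b)}(I_b, j), \ul{X}_0^{(b)}, \ul{x}_2^{(b)}(I_{b-1})\bigr)$ jointly typical under $Q_{U X_0 X_2}$.

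Next I would isolate the independence structure needed to invoke covering. The channel codebook for block $b$ is generated independently of the source codebook, the system state sequences, and all previously generated codewords. Since for every fixed value of $I_b$ the $2^{nR'}$ candidate codewords $\{\ul{u}^{(b)}(I_b, j)\}_{j=1}^{2^{nR'}}$ are i.i.d.\ according to $\prod_{t=1}^n Q_U(u_t)$ and independent of the pair $(\ul{X}_0^{(b)}, \ul{x}_2^{(b)}(I_{b-1}))$, this independence is preserved after conditioning on $E_1^{(b-1)c}$, which only restricts the joint realization of $(\ul{X}_0^{(b)}, I_{b-1})$.

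With these ingredients in place, the standard covering lemma applies: for each fixed pair $(\ul{x}_0, \ul{x}_2) \in \calT_{\epsilon_1}^n(Q_{X_0 X_2})$, the probability that a single draw $\ul{U} \sim \prod_t Q_U(u_t)$ is jointly $\epsilon_2$-typical with $(\ul{x}_0, \ul{x}_2)$ under $Q_{U X_0 X_2}$ is at least $2^{-n(I_Q(U; X_0, X_2) + \delta(\epsilon_2))}$, with $\delta(\epsilon_2) \to 0$ as $\epsilon_2 \to 0$. Hence the probability that none of the $2^{nR'}$ candidates satisfies joint typicality is upper bounded by
\begin{equation*}
\bigl(1 - 2^{-n(I_Q(U; X_0, X_2) + \delta(\epsilon_2))}\bigr)^{2^{nR'}},
\end{equation*}
which vanishes as $n \to \infty$ whenever $R' > I_Q(U; X_0, X_2) + \delta(\epsilon_2)$. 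Averaging this bound over the randomness of the channel codebook and over the event $E_1^{(b-1)c}$ yields the claim.

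The main obstacle is purely bookkeeping rather than technical: one must verify that the conditioning on $E_1^{(b-1)c}$ does not introduce any dependence between the channel codebook and $(\ul{X}_0^{(b)}, \ul{x}_2^{(b)}(I_{b-1}))$, and that $\epsilon_2$ is chosen slightly larger than $\epsilon_1$ so that the conditional typicality bound used in the covering argument is valid on the whole set $\calT_{\epsilon_1}^n(Q_{X_0 X_2})$.
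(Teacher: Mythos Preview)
Your proposal is correct and follows essentially the same approach as the paper: both identify that conditioning on $E_1^{(b-1)c}$ forces $(\ul{X}_0^{(b)},\ul{x}_2^{(b)}(I_{b-1}))\in\calT_{\epsilon_1}^n(Q_{X_0X_2})$, note that the $2^{nR'}$ codewords $\ul{u}^{(b)}(I_b,j)$ are i.i.d.\ $\sim\prod_t Q_U$ and independent of that pair, and invoke the covering lemma. The paper simply cites the covering lemma from El~Gamal--Kim with the substitutions $U\leftarrow\emptyset$, $X^n\leftarrow(\ul{X}_0^{(b)},\ul{X}_2^{(b)}(I_{b-1}))$, $\hat{X}^n(m)\leftarrow\ul{U}^{(b)}(I_b,j)$, whereas you spell out the $(1-2^{-n(I_Q(U;X_0,X_2)+\delta(\epsilon_2))})^{2^{nR'}}$ bound explicitly; the content is the same.
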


\begin{lem}
  \label{lm:lemma_case_2_event_3}
  For any $b\in[2:B]$
  \begin{align}
\lim_{n\rightarrow\infty}E\left(\mathrm{P}(E_3^{(b)}|E_2^{(b)c}\cap E_2^{(b-1)c} \cap E_3^{(b-1)c} \cap E_4^{(b-1)c} \cap E_0^c)\right)=0.
  \end{align}
\end{lem}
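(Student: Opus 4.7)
The plan is to show that $E_3^{(b)}$ fails with vanishing probability by exhibiting the joint typicality of the five sequences $(\ul{U}, \ul{X}_0, \ul{X}_1, \ul{x}_2(\widehat{I}_{b-1}), \ul{Y})$ in block $b$ through two successive applications of the Conditional Typicality Lemma (Markov Lemma), seeded from the fact that conditioning on the prior good events makes the relevant inputs already jointly typical with respect to a coarser distribution.

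First, I would unpack what the conditioning buys us. On $E_0^c\cap E_4^{(b-1)c}$ (and, propagating through $E_1^{(b-1)c}\cap E_2^{(b-1)c}\cap E_3^{(b-1)c}$ in prior blocks), the decoder's estimate agrees with the encoder's choice, $\widehat{I}_{b-1}=I_{b-1}$, so that $\ul{x}_2^{(b)}(\widehat{I}_{b-1})=\ul{x}_2^{(b)}(I_{b-1})$. Moreover, $E_1^{(b-1)c}$ guarantees that the source-encoding step performed at the end of block $b-1$ succeeded, i.e.\ $(\ul{X}_0^{(b)},\ul{x}_2^{(b)}(I_{b-1}))\in\calT^n_{\epsilon_1}(Q_{X_0X_2})$. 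Combining this with the complement of $E_2^{(b)}$ (which, by Lemma~\ref{lm:lemma_case_2_event_2}, holds with probability tending to one under $R'>I_Q(U;X_0,X_2)+\delta(\epsilon_2)$) yields an index $J_b$ such that $\bigl(\ul{u}^{(b)}(I_b,J_b),\ul{X}_0^{(b)},\ul{x}_2^{(b)}(I_{b-1})\bigr)\in\calT^n_{\epsilon_2}(Q_{UX_0X_2})$. So after the setup we may treat $(\ul{U},\ul{X}_0,\ul{X}_2)$ as a jointly $\epsilon_2$-typical triple with respect to $Q_{UX_0X_2}$.

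Next, I would propagate this typicality through the two memoryless channels sitting between $(U,X_0,X_2)$ and $Y$. The encoder generates $\ul{X}_1^{(b)}$ by passing $(\ul{U},\ul{X}_0,\ul{X}_2)$ through the DMC $Q_{X_1|UX_0X_2}$; the Conditional Typicality Lemma (e.g.\ \cite[Lemma~2.5]{NetworkInformationTheory}) then gives, for any $\epsilon'$ with $\epsilon_2<\epsilon'<\epsilon_3$, that
\begin{equation}
\mathrm{P}\Bigl(\bigl(\ul{U},\ul{X}_0,\ul{X}_1^{(b)},\ul{X}_2\bigr)\notin\calT^n_{\epsilon'}(Q_{UX_0X_1X_2})\,\Big|\,(\ul{U},\ul{X}_0,\ul{X}_2)\in\calT^n_{\epsilon_2}(Q_{UX_0X_2})\Bigr)\xrightarrow[n\to\infty]{}0.
\end{equation}
Applying the same lemma again to the physical channel $\ul{Y}^{(b)}$, whose transition $\Gamma(y|x_0,x_1,x_2)$ is memoryless and, by the factorization~\eqref{eq:factorization_case_II}, satisfies the Markov chain $U-(X_0,X_1,X_2)-Y$, extends the joint typicality to the full five-tuple:
\begin{equation}
\mathrm{P}\Bigl(\bigl(\ul{U},\ul{X}_0,\ul{X}_1^{(b)},\ul{X}_2,\ul{Y}^{(b)}\bigr)\notin\calT^n_{\epsilon_3}(Q_{UX_0X_1X_2Y})\,\Big|\,(\ul{U},\ul{X}_0,\ul{X}_1,\ul{X}_2)\in\calT^n_{\epsilon'}(Q_{UX_0X_1X_2})\Bigr)\xrightarrow[n\to\infty]{}0.
\end{equation}
Combining these two bounds with the conditional probability of the conditioning event vanishing and a standard union bound completes the proof.

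The main obstacle is bookkeeping rather than a deep new idea: one must verify that the Markov chain $U-(X_0,X_1,X_2)-Y$ indeed follows from~\eqref{eq:factorization_case_II}, choose the hierarchy $\epsilon_1<\epsilon_2<\epsilon'<\epsilon_3$ so that each invocation of the Conditional Typicality Lemma is legal, and ensure that we genuinely have jointly typical $(\ul{U},\ul{X}_0,\ul{X}_2)$ before invoking it — this is where the inductive dependence on $E_0^c$ and the previous-block events becomes essential, since otherwise $\widehat{I}_{b-1}\neq I_{b-1}$ would desynchronize Agents 1 and 2 and destroy the initial typicality.
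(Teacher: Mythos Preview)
Your proposal is correct and follows essentially the same route as the paper, which simply points to the conditional typicality lemma and says the argument mirrors Appendix~\ref{sec:proof-lemma-case-1-event-2} with the extra $\ul{u}^{(b)}$ layer. Your two-step application of the Conditional Typicality Lemma (through $Q_{X_1|UX_0X_2}$ and then through $\Gamma$) is exactly the ``necessary modification'' the paper alludes to, and your explicit handling of $E_2^{(b)c}$ via Lemma~\ref{lm:lemma_case_2_event_2} fills in a detail the paper leaves implicit.
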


\begin{lem}
  \label{lm:lemma_case_2_event_4}
  If ${R+R'}<I_Q(U;Y,X_2)-\delta(\epsilon_3)$, then for any $b\in[2:B]$
  \begin{align}
    \lim_{n\rightarrow\infty}E\left(\mathrm{P}(E_4^{(b)}\cap E_2^{(b-1)c} \cap E_3^{(b-1)c} \cap E_4^{(b-1)c} \cap E_0^c)\right)=0.
  \end{align}
\end{lem}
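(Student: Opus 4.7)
\textbf{Proof proposal for Lemma~\ref{lm:lemma_case_2_event_4}.} The lemma is the standard ``spurious codeword'' packing bound for the Gel'fand–Pinsker style decoder used in block $b$; only the conditioning event needs to be handled carefully. The plan is to (i) use the conditioning to replace $\widehat{I}_{b-1}$ by $I_{b-1}$ and to put $(\ul{x}_2^{(b)}(I_{b-1}),\ul{Y}^{(b)})$ in the typical set of $Q_{X_2Y}$, (ii) exploit the symmetry of the random channel codebook to reduce to the case $(I_b,J_b)=(1,1)$ and expose the independence of the remaining codewords from $\ul{Y}^{(b)}$, (iii) apply the joint typicality lemma to each spurious codeword, and (iv) close the argument with a union bound.

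First, on the conditioning event $E_0^c\cap E_1^{(b-1)c}\cap E_2^{(b-1)c}\cap E_3^{(b-1)c}\cap E_4^{(b-1)c}$, Agent~$2$'s decoder in block $b-1$ recovers the correct pair $(I_{b-1},J_{b-1})$, so $\widehat{I}_{b-1}=I_{b-1}$ and $\ul{x}_2^{(b)}(\widehat{I}_{b-1})=\ul{x}_2^{(b)}(I_{b-1})$. Moreover, $E_3^{(b-1)c}$ in combination with the encoding rules guarantees that in block $b$ the quintuple $(\ul{x}_0^{(b)},\ul{x}_2^{(b)}(I_{b-1}),\ul{u}^{(b)}(I_b,J_b),\ul{x}_1^{(b)},\ul{Y}^{(b)})$ is jointly $\epsilon_3$-typical with $Q_{X_0X_2UX_1Y}$; in particular $(\ul{x}_2^{(b)}(I_{b-1}),\ul{Y}^{(b)})\in\calT^n_{\epsilon_3}(Q_{X_2Y})$.

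Second, the codebook $\mc{C}_c^{(b)}$ is generated with each of its $\lceil 2^{n(R+R')}\rceil$ codewords drawn i.i.d.\ according to $\prod_{t=1}^n Q_U(u_t)$, independently across codewords and independently of the source codebook $\mc{C}_s^{(b)}$. By the permutation symmetry of this construction one may assume, without loss of generality, that $(I_b,J_b)=(1,1)$. Then for every $(i,j)\neq (1,1)$ the codeword $\ul{u}^{(b)}(i,j)$ is i.i.d.\ $\sim Q_U$ and independent of $\ul{u}^{(b)}(1,1)$; since $\ul{x}_1^{(b)}$, and therefore $\ul{Y}^{(b)}$, is a function of $(\ul{u}^{(b)}(1,1),\ul{x}_0^{(b)},\ul{x}_2^{(b)}(I_{b-1}))$ passed through the memoryless channel $\Gamma$, the spurious codeword $\ul{u}^{(b)}(i,j)$ is independent of $(\ul{x}_2^{(b)}(I_{b-1}),\ul{Y}^{(b)})$. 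The conditioning event depends only on the source codebook and on variables from blocks $1,\dots,b-1$, so this independence is preserved under conditioning.

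Third, the joint typicality lemma (see \cite[Sec.~2.5]{NetworkInformationTheory}) applied to the product distribution $Q_U\otimes\mathrm{P}_{(\ul{X}_2,\ul{Y})}$ gives, for each $(i,j)\neq (I_b,J_b)$,
\begin{align}
\mathrm{P}\!\left((\ul{u}^{(b)}(i,j),\ul{x}_2^{(b)}(I_{b-1}),\ul{Y}^{(b)})\in\calT^n_{\epsilon_3}(Q_{UX_2Y})\,\Big|\,\ul{x}_2^{(b)}(I_{b-1}),\ul{Y}^{(b)}\right)\leq 2^{-n(I_Q(U;X_2,Y)-\delta(\epsilon_3))}.
\end{align}
A union bound over at most $2^{n(R+R')}$ such pairs then yields
\begin{align}
\mathbb{E}\!\left(\mathrm{P}(E_4^{(b)}\mid E_1^{(b-1)c}\cap E_2^{(b-1)c}\cap E_3^{(b-1)c}\cap E_4^{(b-1)c}\cap E_0^c)\right)\leq 2^{-n(I_Q(U;Y,X_2)-\delta(\epsilon_3)-R-R')},
\end{align}
which vanishes as $n\to\infty$ under the hypothesis $R+R'<I_Q(U;Y,X_2)-\delta(\epsilon_3)$. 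The main subtlety — and the only step that requires real care — is the independence argument in the second paragraph: because $I_b$ and $J_b$ are themselves functions of the whole channel codebook, one must invoke the exchangeability of the i.i.d.\ codeword generation to legitimately condition on $(I_b,J_b)$ before applying the packing bound; everything else follows standard joint-typicality calculations.
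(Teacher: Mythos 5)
Your proposal is correct and follows essentially the same route as the paper: the paper proves this lemma by pointing to the packing-lemma argument used for the analogous event in case I (Appendix~\ref{sec:proof-lemma-case-1-event-3}) with $\ul{U}$ in place of $\ul{X}_1$, i.e., condition on correct decoding in block $b-1$, exploit the independence of the spurious codewords $\ul{u}^{(b)}(i,j)$, $(i,j)\neq(I_b,J_b)$, from $(\ul{x}_2^{(b)}(I_{b-1}),\ul{Y}^{(b)})$, and invoke the packing lemma with $2^{n(R+R')}$ candidates to get the threshold $R+R'<I_Q(U;Y,X_2)-\delta(\epsilon_3)$. The only difference is cosmetic: you unfold the packing lemma into its joint-typicality bound plus a union bound, whereas the paper cites it as a black box.
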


Hence, we can find $\epsilon_1$, $\epsilon_2$, and $\epsilon_3$ small enough such that $\lim_{n\rightarrow\infty}E(\mathrm{P}(E))=0$. In particular, there must exists at least one sequence of codes such that $\lim_{n\rightarrow\infty}\mathrm{P}(E)=0$. Since $\epsilon>0$ can be chosen arbitrarily small, $\ol{Q}$ is an achievable empirical coordination.
\end{IEEEproof}

A few comments are in order regarding the result in Theorem~\ref{thm:achiev-GP}. The condition $I_Q(X_0;X_2) < I_Q(U;Y,X_2)-I_Q(U;X_0X_2)$ ensures $\widehat{i}_{b-1} = i_{b-1}$ with high probability as $m\rightarrow\infty$, so that the ``side information'' $\ul{x}_2(i_{b-1})$ used by Agent $1$ to correlate its actions is identical to the true actions $\ul{x}_2(\widehat{i}_{b-1})$ of Agent  $2$; hence, Agent $1$ effectively knows the actions of Agent $2$ without directly observing them. Furthermore, the state sequence $\ul{x}_0^{(b+1)}$, which plays the role of the message in block $b$, is independent of the ``side information'' $(\ul{x}_0^{(b)}, \ul{x}_2(i_{b-1}))$; this allows us to reuse classical coding schemes for the transmission of messages over state-dependent channels. However, the proof of Theorem~\ref{thm:achiev-GP} exhibits a key difference with the usual Gel'fand-Pinsker coding scheme~\cite{GP-1980} and its extensions~\cite{Merhav2003}. While using the channel decoder's past outputs does not  improve the channel capacity, it helps for coordination. Specifically, a classical Gel'fand-Pinsker coding results would lead to an information constraint 
\begin{equation}\label{eq:GP-pure}
I_Q(X_0;X_2) < I_Q(U;Y) - I_Q(U;X_0,X_2)
\end{equation}
which is more restrictive than \eqref{eq:lower-bound-II}.

\begin{cor}
  \label{thm:achiev-ISIT} 
Consider the observation structure in case I. Let $\ol{Q} \in\Delta(\calX)$ be with marginal $\rho_0\in\Delta(\calX_0)$. 
If $Q\in\Delta(\calX\times\calY)$ defined as
\begin{align}
  Q(x_0,x_1,x_2,y) = \Gamma(y|x_0,x_1,x_2)  \ol{Q}(x_0,x_1,x_2) \label{eq:factorization_case_1}
\end{align}
satisfies the constraint
\begin{equation}
\label{eq:lower-bound-I}
 I_{Q} (X_0 ; X_2) < I_{Q} (X_1;Y |X_0,X_2),
\end{equation}
then $\ol{Q}$ is \tc{black}{an achievable empirical coordination.}
\end{cor}
\begin{IEEEproof}
  Case I differs from Case II by having the state available strictly causally at Agent 2; we can therefore apply the results of Theorem~\ref{thm:achiev-GP} by providing $X_0$ as a second output to Agent 2. Applying Theorem~\ref{thm:achiev-GP} with $(Y,X_0)$ in place of $Y$, we find that if $Q$ defined as in~\eqref{eq:factorization_case_II} satisfies $ I_{Q} (X_0 ; X_2) < I_{Q} (U;YX_0X_2)-I_{Q} (U;X_0X_2)$, then $\ol{Q}$ is an achievable empirical coordination. Since, $I_{Q} (U;YX_0X_2)-I_{Q} (U;X_0X_2)=I_{Q}(U;Y|X_0X_2)$, setting $U=X_1$ yields the desired result.
\end{IEEEproof}

Setting aside the already discussed case of equality in~\eqref{eq:upper-bound}, the information constraints of Theorem~\ref{thm:upper-bound} and Corollary~\ref{thm:achiev-ISIT} coincide, hence establishing a necessary and sufficient condition for a joint distribution $\ol{Q}\in\Delta(\calX)$ to be implementable in case I and a complete characterization of the associated set of achievable payoffs. This also shows that having Agent 1 select the actions played by Agent 2 and separating source and channel encoding operations do not incur any loss of optimality. We apply this result in Section~\ref{sec:coded-power-control} to an interference network with two transmitters and two receivers, in which Transmitter $1$ may represent the most informed agent, such as a primary transmitter~\cite{li-icassp-2014, haykin-jsac-2005}, $\Gamma$ may represent an \ac{SINR} feedback channel from Receiver $2$ to Transmitter $2$. 

Our results hold under the assumption of perfect monitoring~\cite{Gossner-2006} in which Agent $2$ perfectly monitors the actions of Agent $1$, i.e., $Y = X_1$. Equations \eqref{eq:upper-bound}, \eqref{eq:lower-bound-I}, \eqref{eq:lower-bound-II}, and \eqref{eq:GP-pure} then coincide with the information constraint $I_Q(X_0;X_2) \leq H_Q(X_1|X_0,X_2)$~\cite{Gossner-2006}, confirming as noted in~\cite{Gossner-2006,Cuff-2011} that allowing Agent $1$ to observe the action of the other agent or providing Agent 2 with the past realizations of the state $X_0^{n-1}$ does not improve the set of feasible payoffs under perfect monitoring. However, this observation regarding the set of \emph{feasible payoffs} may not hold for the set of \emph{Nash equilibrium payoffs}, which are relevant when agents have diverging interests and in which case it matters whether an agent observes the actions of the others or not. In a power control setting, if the transmitters implement a cooperation plan that consists in transmitting at low power as long as no transmitter uses a high power level, see \eg \cite{letreust-twc-2010}, it matters if the transmitters are able to check whether the others effectively use a low power level. We focus here on a cooperative setting in which a designer has a precise objective (maximizing the network throughput, minimizing the total network energy consumption, etc.) and wants the terminals to implement a power control algorithm with only local knowledge and reasonable complexity. This setting can be seen as a first step toward analyzing the more general situation in which agents may have diverging interests; this would happen in power control in the presence of several operators.


Note that we have not proved whether the information constraint of Theorem~\ref{thm:achiev-GP} is a necessary condition for implementability in case II. One might be tempted to adopt a side information interpretation of the problem to derive the converse, since~\eqref{eq:lower-bound-II} resembles the situation of~\cite{CoverChiang-2002}; however, finding the appropriate auxiliary variables does not seem straightforward and is left as a refinement of the present analysis.

While Theorem~\ref{thm:achiev-GP} and Corollary~\ref{thm:achiev-ISIT} have been derived for an \ac{iid} random state, the results generalize to a situation in which the state is constant over $L\geq1$ consecutive stages and \ac{iid} from one block of $L$ stages to the next. For strategies as in case I, the information constraint becomes
\begin{align}
  \frac{1}{L} I_Q(X_0;X_2)< I_Q(X_1;Y|X_0,X_2).\label{eq:block_fading}
\end{align}
In fact, one can reproduce the argument in the proof of Corollary~\ref{thm:achiev-ISIT} and remark that one can communicate over the channel at a rate $L$ times larger than the rate required for the covering of the source. Specifically, to encode $m$ realizations of the random state, the source codebooks must contain $ 2^{mR}$ codewords with~$R>I_Q(X_0;X_2)$; however, the channel codebooks can contain $2^{mLR}$ codewords with $R<I_Q(X_1;Y|X_0X_2)$. The source and channel codes are compatible if $m I_Q(X_0;X_2)<mLI_Q(X_1;Y|X_0X_2)$, which is the desired result in~\eqref{eq:block_fading}. This modified constraint is useful in some wireless communication settings for which channel states are often block \ac{iid}. When $L\rightarrow \infty$, which correspond to a single realization of the random state, the information constraint is always satisfied and any $\ol{Q}\in\Delta(\calX)$ is implementable.

Finally, we emphasize that the information constraint obtained when coordinating via actions differs from what would be obtained when coordinating using classical communication~\cite{Shannon1948} with a dedicated channel. If Agent 1 could communicate with Agent 2 through a channel with capacity $C$, then all $\widetilde{Q}\in\Delta(\calX)$ subject to the information constraint
\begin{align}
  I_{\widetilde{Q}}(X_0;X_2)\leq C
\end{align}
would be implementable. In contrast, the constraint $I_{Q}(X_0;X_2)<I_Q(X_1;Y|X_0X_2)$ reflects the following two distinctive characteristics of communication via actions.
\begin{enumerate}
\item The input distribution $X_1$ to the ``implicit channel'' used for communication between Agent 1 and Agent 2 cannot be optimized independently of the actions and of the state.
\item The output $Y$ of the implicit channel depends not only on $X_1$ but also on $(X_0,X_2)$; essentially, the state $X_0$ and the actions $X_2$ of Agent $2$ act as a state for the implicit channel.
\end{enumerate}
Under specific conditions, the coordination via actions may reduce to coordination with a dedicated channel. For instance, if the payoff function factorizes as $w(x_0,x_1,x_2)\eqdef w_1(x_1)w_2(x_0,x_2)$ and if the observation structure satisfies $(X_0,X_2)-X_1-Y$, then any joint distribution $\widetilde{Q}(x_0,x_1,x_2)\eqdef \ddot{Q}(x_0,x_2)\dot{Q}(x_1)$ satisfying the information constraint 
\begin{align}
    I_{\widetilde{Q}}(X_0;X_2) < I_{\dot{Q}}(X_1;Y)
\end{align}
would be \tc{black}{an achievable empirical coordination}; in particular, one may optimize $\dot{Q}$ independently. In addition, if $w_1(x_1)$ is independent of $x_1$, the information constraint further simplifies as
\begin{align}
  I_{\widetilde{Q}}(X_0;X_2) < \max_{\dot{Q}}I_{\dot{Q}}(X_1;Y),
\end{align}
and the implicit communication channel effectively becomes a dedicated channel.\smallskip

\section{Expected payoff optimization}
\label{sec:optimization-problem}

We now study the problem of determining $Q\in\Delta(\calX)$ that leads to the maximal payoff in case I and case II. We establish two formulations of the problem: one that involves $Q$ viewed as a function, and one that explicitly involves the vector of probability masses of $Q$. Although the latter is seemingly more complex, it is better suited to numerically determine the maximum expected payoff and turns out particularly useful in Section~\ref{sec:coded-power-control}. While we study the general optimization problem in Section \ref{sec:general-OP}, we focus on the case of perfect monitoring in Section~\ref{sec:PM}, for which we are able to gain more insight into the structure of the optimal solutions. 

\subsection{General optimization problem}
\label{sec:general-OP}

From the results of Section \ref{sec:information-constraints}, the determination of the largest average payoff requires solving the following optimization problem, with $\ell \in \{\mathrm{I},\mathrm{II} \}$:
\begin{equation}
\begin{array}{clcll}
\text{minimize} & \multicolumn{4}{l}{-\mathbb{E}_{Q}[w(X_0, X_1, X_2)]  =
 - \displaystyle{\sum_{(x_0, x_1, x_2, y, u)} Q(x_0, x_1, x_2, y, u) w(x_0, x_1, x_2)}}\\
\text{s.t.} & \displaystyle{-1 + \sum_{(x_0, x_1, x_2, y, u)}  Q(x_0, x_1, x_2, y,u) }& \stackrel{(c)}{=}& 0 &\\
\forall (x_0, x_1, x_2, y,u)  \in \mc{X} \times \mc{Y} \times \mc{U}, & \frac{\displaystyle{Q(x_0, x_1, x_2, y,u)}}{\displaystyle{\sum_{(y,u)} Q(x_0, x_1, x_2, y,u)}}
-\Gamma(y|x_0, x_1, x_2) &\stackrel{(d)}{=}& 0&  \\
\forall x_0  \in \mc{X}_0 , &\displaystyle{ - \rho_0(x_0) + \sum_{(x_1,x_2,y,u)}  Q(x_0, x_1, x_2, y,u)} &\stackrel{(e)}{=}& 0 &
\\
\forall (x_0, x_1, x_2, y,u)   \in\mc{X} \times \mc{Y} \times \mc{U}, &\displaystyle{-Q(x_0, x_1, x_2, y,u) }&\stackrel{(f)}{\leq}& 0 &\\
 &  \displaystyle{\Phi^{\mathrm{\ell}}(Q)} &\stackrel{(g)}{\leq} & 0 &
\end{array} \label{Optpb-Q-general}
\end{equation}
where in case I $\Phi^{\mathrm{I}}(\cdot)$ is defined in \eqref{eq:def-Phi-I} while in case II
\begin{align}
  \Phi^{\mathrm{II}}(Q) \triangleq I_Q(X_0;X_2) - I_Q(U;Y,X_2) + I_Q(U;X_0,X_2).
\end{align}
We start by addressing the potential convexity of the optimization problem~\cite{Boyd2004}. The objective function to minimize is linear in $Q$ and the constraints $(c)$, $(d)$, $(e)$, and $(f)$ restrict the domain to a convex subset of the unit simplex. Therefore, it suffices to show that the domain resulting from the additional constraint $(g)$ is convex for the optimization problem to be convex. In case I, for which the set $\mc{U}$ reduces to a singleton, \tc{black}{the following lemma proves} that $\Phi^{\mathrm{I}}$ is a convex function of $Q$, which implies that the additional constraint $(g)$ defines a convex domain. 
\begin{lem}\label{lemma:convexity-of-phi}
The function $\Phi^{\mathrm{I}}$ is strictly convex over the set of distributions $Q \in \Delta(\calX \times \mc{Y})$ with marginal $\rho_0\in\Delta(\calX_0)$ that factorize as  
    \begin{equation}
Q(x_0,x_1,x_2,y) = \Gamma(y|x_0,x_1,x_2) \rho_0(x_0) Q(x_1,x_2|x_0),\end{equation} with $\rho_0$ and $\Gamma$ fixed.
\end{lem}
\begin{proof}
  See Appendix~\ref{sec:proof-lemma-refl}.
\end{proof}
For case II, we have not proved that $\Phi^{\mathrm{II}}$ is a convex function but, by using a time-sharing argument, it is always possible to make the domain convex. In the remaining of the paper, we assume this convexification is always performed, so that the optimization problem is again convex. 

We then investigate whether Slater's condition holds, so that the \ac{KKT} conditions become necessary conditions for optimality. Since the problem is convex, the \ac{KKT} conditions would also be sufficient. 

\begin{prop} Slater's condition holds in cases I and II for irreducible
channel transition probabilities \idest such that $\forall
  (x_0,x_1,x_2,y) \in \mc{X}_0 \times \mc{X}_1 \times \mc{X}_2 \times \mc{Y}, \Gamma(y|x_0,x_1,x_2) >0$.
\end{prop}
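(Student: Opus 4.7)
The plan is to exhibit, in each case, a distribution $Q$ lying in the relative interior of the feasible set of the optimization problem \eqref{Optpb-Q-general}. Since constraints $(c)$--$(f)$ are affine in $Q$, Slater's condition reduces to producing a $Q$ that satisfies $(c)$--$(f)$ together with $\Phi^{\ell}(Q)<0$, i.e., with the nonlinear information constraint holding strictly.

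For case I, the natural candidate is the product construction
\begin{align}
Q(x_0,x_1,x_2,y) \;=\; \rho_0(x_0)\, p(x_1)\, q(x_2)\, \Gamma(y \mid x_0,x_1,x_2),
\end{align}
where $p\in\Delta(\mc{X}_1)$ and $q\in\Delta(\mc{X}_2)$ are full-support (e.g.\ uniform) distributions. Constraint $(c)$ holds because $Q$ is the product of probability distributions, $(d)$ is by construction, $(e)$ follows from marginalizing out $(x_1,x_2,y)$, and $(f)$ is immediate. For the information constraint, the product structure forces $I_Q(X_0;X_2)=0$, so that $\Phi^{\mathrm{I}}(Q)=-I_Q(X_1;Y\mid X_0,X_2)$, and Slater's condition reduces to $I_Q(X_1;Y\mid X_0,X_2)>0$. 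Because $\Gamma$ is irreducible (every transition probability is strictly positive), the conditional channel $x_1\mapsto \Gamma(\cdot\mid x_0,x_1,x_2)$ can be chosen non-degenerate in $x_1$ for some $(x_0,x_2)$ in the support of $\rho_0\otimes q$, and combined with the full support of $p$ this yields the desired strict positivity.

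For case II, the argument is analogous, but one must also choose the auxiliary variable $U$. I would take $U\eqdef X_1$ and set
\begin{align}
Q(x_0,x_1,x_2,y,u) = \rho_0(x_0)\, p(x_1)\, q(x_2)\, \Gamma(y\mid x_0,x_1,x_2)\, \mathbbm{1}_{\{u=x_1\}},
\end{align}
with $p,q$ again full-support. Then $I_Q(U;X_0,X_2)=0$ because $U=X_1$ is independent of $(X_0,X_2)$ by construction, and $I_Q(U;Y,X_2)=I_Q(X_1;Y\mid X_2)$. Thus $\Phi^{\mathrm{II}}(Q)=-I_Q(X_1;Y\mid X_2)$, which is strictly negative by the same informativeness argument used in case~I. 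After the convexification by time-sharing alluded to just before the statement, this strictly feasible point remains strictly feasible in the convexified problem.

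The main obstacle is the last step of each construction: verifying that $I_Q(X_1;Y\mid X_0,X_2)>0$ strictly from the irreducibility hypothesis alone. Irreducibility rules out zero transition probabilities but does not by itself preclude $\Gamma(y\mid x_0,x_1,x_2)$ from being constant in $x_1$; in the latter degenerate situation no $Q$ can separate the information constraint from equality. I would therefore make explicit the (mild) non-degeneracy requirement that $\Gamma(\cdot\mid x_0,\cdot,x_2)$ is not a constant matrix in its $x_1$ argument for at least one pair $(x_0,x_2)\in\mathrm{supp}(\rho_0)\times\mc{X}_2$, which is the operative content of ``useful observation'' and is automatically satisfied in the power-control application of Section~\ref{sec:coded-power-control} where the observation $Y$ is (a noisy function of) the SINR.
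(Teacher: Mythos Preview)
Your construction is essentially the same as the paper's: take $X_0,X_1,X_2$ mutually independent with full-support marginals, and in case~II set $U=X_1$; independence kills $I_Q(X_0;X_2)$ and $I_Q(U;X_0,X_2)$, reducing $\Phi^\ell(Q)$ to $-I_Q(X_1;Y,X_2)$ (equivalently $-I_Q(X_1;Y\mid X_2)$ under independence). The paper treats case~II and remarks that case~I follows by specialization, while you handle the two cases in parallel, but the substance is identical.

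The ``obstacle'' you flag at the end is real and, interestingly, is \emph{not} addressed in the paper either: the paper simply writes $-H_Q(X_1)+H_Q(X_1\mid Y,X_2)<0$ without justifying the strictness, relying implicitly on the channel being non-trivial in $x_1$. As you correctly observe, irreducibility ($\Gamma>0$ everywhere) does not by itself rule out $\Gamma(\cdot\mid x_0,x_1,x_2)$ being constant in $x_1$; in that degenerate situation $I_Q(X_1;Y,X_2)=0$ and no $Q$ achieves strict feasibility. Your explicit acknowledgement of this non-degeneracy requirement is therefore an improvement over the paper's presentation rather than a deficiency of your argument.
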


\begin{IEEEproof}
We establish the existence of a strictly feasible point in case II, from which the existence for case I follows as a special case. Consider a distribution $Q\in\Delta(\calX\times\calY\times\calU)$ such that $X_0$, $X_1$, and $X_2$ are independent, and $U=X_1$. We assume without loss of generality that the support of the marginals $Q_{X_i}$, $i\in\{0,1,2\}$ is full, \idest $\forall x_i \in \mc{X}_i,Q_{X_i}(x_i) >0$. If the channel transition probability is irreducible, note that $Q(x_0,x_1,x_2,y,u)$ is then strictly positive, making the constraint $(f)$ inactive. As for inequality constraint $(g)$, notice that
\begin{align}
I_{Q}(X_0;X_2) - I_{Q}(U;Y,X_2) + I_{Q}(U;X_0,X_2) &= 0 - I_{Q}(X_1;Y) - I(X_1;X_2|Y) + I_{Q}(X_1;X_0,X_2) \\
& = - I_{Q}(X_1;Y)- I_Q(X_1;X_2|Y) \label{Indep} \\
& = -H_Q(X_1) + H_Q(X_1|Y,X_2)\\
& < 0. \label{Mutprop}
\end{align}
Hence, the chosen distribution constitutes a strictly feasible point for the domain defined by constraints $(c)$-$(g)$, and remains a strictly feasible point after convexification of the domain.
\end{IEEEproof}

Our objective is now to rewrite the above optimization problem more explicitly in terms of the vector of probability masses that \tc{black}{describes} $Q$. This is useful not only to exploit standard numerical solvers in Section \ref{sec:coded-power-control}, but also to apply the \ac{KKT} conditions in Section \ref{sec:PM}. We introduce the following notation. Without loss of generality, the finite sets $\calX_k$ for $k\in\{0,1,2\}$ are written in the present section as set of indices $\calX_k= [1; n_k]$; similarly, we write $\mc{U}=[1:n_u]$ and $\calY=[1:n_y]$. With this convention, we define a bijective mapping $\psi^{\ell}:\calX\times\calY \times\mc{U}\rightarrow [1:n^\ell]$ as 
\begin{align}
  \label{eq:indexing_formula}
\psi^{\ell}(i', j', k', l', m')\triangleq  m' + n_u(l'-1)+n_un_y(k'-1)+n_un_yn_2(j'-1)  +n_un_yn_2n_1(i'-1),
\end{align}
which maps a realization $(i', j', k', l', m')\in \calX\times\calY \times\mc{U}$ to a unique index $\psi^{\ell}(i', j', k', l', m')\in [1:n^\ell]$.
We also set $n^{\mathrm{I}} \eqdef n_0n_1n_2n_y$ and $n^{\mathrm{II}} \eqdef n_0n_1n_2n_yn_u$. This allows us to introduce the vector of probability masses $q^{n^{\mathrm{\ell}}}=(q_1,q_2,\dots,q_{n^{\mathrm{\ell}}})$ for $\mathrm{\ell}\in \{\mathrm{I}, \mathrm{II}\}$, in which each component $q_i$, $i \in [1:n^{\ell}]$, is equal to $Q(  (\psi^{\ell})^{-1}(i))$, and the vector of payoff values $w^{n^{\mathrm{\ell}}} = (w_1, w_2, \dots, w_{n^{\mathrm{\ell}}}) \in \mathbb{R}^{n^{\mathrm{\ell}}}$, in which each component $w_i$ is the payoff of $(\psi^{\ell})^{-1}(i)$. The relation between the mapping $Q$ (resp. $w$) and the vector $q^{n^{\mathrm{\ell}}}$ (resp. $w^{n^{\mathrm{\ell}}}$) is summarized in Table \ref{Tab:indexingU}.

\begin{table}[!h]
\caption{Chosen indexation for the payoff vector $w$
and probability distribution vector $q$. Bold lines delineate
blocks of size $n_1 n_2 n_y n_u$ and each block
corresponds to a given value of the random state $X_0$. The $5-$uplets
are sorted according to a lexicographic order.}
\begin{center}
\begin{tabular}{|l||c|c|c|c|c|}
\hline
Index of $q_i$ & $X_0$ & $X_1$ & $X_2$ & $Y$ & $U$\\
\hline\hline
1 & 1&1&1 & 1 & 1\\
2 & 1 & 1 & 1 & 1 & 2 \\
\vdots & \vdots & \vdots & \vdots & \vdots & \\
$n_u$ & 1 & 1 & 1 &  1 & $n_u$\\
\hline
$n_u + 1$ & 1 & 1 & 1 & 2 & 1\\
\vdots & \vdots & \vdots & \vdots & \vdots  & \vdots\\
$2 n_u$ & 1 &  1 & 1 & 2 & $n_u$\\
\hline
\vdots & \vdots & \vdots & \vdots  & \vdots & \vdots \\
\hline
$n_1 n_2 n_y n_u -n_u + 1$ & 1 & $n_1$ & $n_2$ & $n_y$ & 1\\
\vdots & \vdots & \vdots & \vdots & \vdots & \vdots \\
$n_1 n_2 n_y n_u$ & 1 & $n_1$ &$n_2$ & $n_y$ & $n_u$ \\
\hline\hline
\vdots & \vdots & \vdots & \vdots & \vdots & \vdots \\
\vdots & \vdots & \vdots & \vdots & \vdots & \vdots \\
\hline\hline
$(n_0-1) n_1 n_2 n_y n_u + 1$ & $n_0$ & $1$ & $1$ & 1 & $1$\\
\vdots & \vdots & \vdots & \vdots  & \vdots & \vdots \\
$n_0 n_1 n_2 n_y n_u $ & $n_0$ & $n_1$ & $n_2$  & $n_y$ & $n_u$ \\
\hline
\end{tabular}\label{Tab:indexingU}
\end{center}
\end{table}

Using the proposed indexing scheme, the optimization problem is written in standard form as follows.
\begin{equation}
\begin{array}{clc}
\text{minimize} & \multicolumn{2}{l}{-\mathbb{E}_{Q}[w(X_0, X_1, X_2)]  = - \displaystyle{\sum_{i=1}^{n^{\mathrm{\ell}} }} q_i w_i}\\
\text{s.t.} & \displaystyle{-1 + \sum_{i=1}^{n^{\mathrm{\ell}}} q_i}&
 \stackrel{(h)}{=} 0 \\
\forall i \in [1;n^{\mathrm{\ell}}], & \ds{\frac{\displaystyle{q_i}}{\Theta_{i}}} - \Gamma_i  &\stackrel{(i)}{=} 0  \\
\forall i \in [1:n_0], &\displaystyle{ - \rho_{0}(i)  + \sum_{j=1+(i-1)n_1n_2n_yn_u}^{in_1n_2n_yn_u}}
 q_j   &\stackrel{(j)}{=} 0  \\
\forall i \in [1:n^{\mathrm{\ell}}],  &\displaystyle{-q_i }&\stackrel{(k)}{\leq} 0 \\
 &  \displaystyle{\phi^{\mathrm{\ell}}(q^{n^{\mathrm{\ell}}})}
 &\stackrel{(\ell)}{\leq}  0
\end{array} \label{Optpb-Q-gen}
\end{equation}
where
\begin{equation}
\Theta_i \stackrel{\vartriangle}{=} \displaystyle{\sum_{\substack{j\in\{1,\dots,n_yn_u\}\\ k\in\{1,\dots,n_0n_1n_2 \}}} q_{(k-1)n_yn_u+j}. \mathbbm{1}_{\{(k-1)n_yn_u \leq i \leq kn_yn_u -1\}} }
\end{equation}
 and $\forall i \in [1:n_0]$, $\rho_0(i) = \mathrm{P}(X_0=i)$ and $\forall i \in [1:n^{\mathrm{\ell}}]$, $\Gamma_i$ corresponds to the value of $\Gamma(y|x_0,x_1,x_2)$, according to Table \ref{Tab:indexingU}. As for the function associated with inequality constraint $(\ell)$, it writes in case II (case I follows by specialization with $|\mc{U}|=1$) as follows:

\begin{align}
\displaystyle{\phi^{\mathrm{II}}(q^{n^{\mathrm{II}}})} &= I_{q^{n^{\mathrm{II}}}}(X_0;X_2) - I_{q^{n^{\mathrm{II}}}}(U;Y,X_2) + I_{q^{n^{\mathrm{II}}}}(U;X_0,X_2) \nonumber \\
&= H_{q^{n^{\mathrm{II}}}}(X_0)  - H_{q^{n^{\mathrm{II}}}}(U,X_0|X_2) + H_{q^{n^{\mathrm{II}}}}(U|Y,X_2) \nonumber \\
&= H_{q^{n^{\mathrm{II}}}}(X_0) +  H_{q^{n^{\mathrm{II}}}}(X_2) -  H_{q^{n^{\mathrm{II}}}}(X_0,X_2,U) + H_{q^{n^{\mathrm{II}}}}(X_2,Y,U) - H_{q^{n^{\mathrm{II}}}}(X_2,Y)
\end{align}
with
\begin{align}
H_{q^{n^{\mathrm{II}}}}(X_0) = -&
\displaystyle{\sum_{i=1}^{n_0}} \left[ \big(\displaystyle{
\sum_{j=1+(i-1)n_1n_2n_yn_u}^{in_1n_2n_yn_u}} q_j\big)\log\big(\displaystyle{\sum_{j=1+(i-1)n_1n_2n_yn_u}^{i n_1n_2n_yn_u}} q_j\big) \right],
\end{align}

\begin{align}
H_{q^{n^{\mathrm{II}}}}(X_2) = -&
\displaystyle{\sum_{i=1}^{n_2}} \left[ \big(\displaystyle{\sum_{j=1}^{n_0n_1}}
\sum_{k=1}^{n_yn_u} q_{(i-1)n_un_y + (j-1)n_2n_yn_u + k}\big) \log\big(\displaystyle{\sum_{j=1}^{n_0n_1}}\sum_{k=1}^{n_yn_u} q_{(i-1)n_un_y + (j-1)n_2n_yn_u + k}\big)  \right],
\end{align}

\begin{align}
H_{q^{n^{\mathrm{II}}}}(X_2,Y,U) = -& \displaystyle{\sum_{i=1}^{n_2n_yn_u}} \left[ \big(\displaystyle{\sum_{j=1}^{n_0n_1}} q_{(j-1)n_2n_yn_u + i}\big)\log\big(\displaystyle{\sum_{j=1}^{n_0n_1}} q_{(j-1)n_2n_yn_u + i}\big)
\right],
\end{align}

\begin{align}
&H_{q^{n^{\mathrm{II}}}}(X_0,X_2,U) = - \displaystyle{\sum_{i=1}^{n_0}}
\displaystyle{\sum_{j=1}^{n_2}}  \displaystyle{\sum_{k=1}^{n_u}}  \Bigg[ \big(\displaystyle{\sum_{l=1}^{n_0n_1}} \displaystyle{\sum_{m=1}^{n_y}} q_{(i-1)n_1n_2n_yn_u+(j-1)n_yn_u+k+(l-1)n_2n_yn_u+(m-1)n_u}\big) \nonumber\\
&\phantom{=========} \log\big(\displaystyle{\sum_{l=1}^{n_0n_1}} \displaystyle{\sum_{m=1}^{n_y}} q_{(i-1)n_1n_2n_yn_u+(j-1)n_yn_u+k+(l-1)n_2n_yn_u+(m-1)n_u}\big) \Bigg],
\end{align}
and
\begin{align}
H_{q^{n^{\mathrm{II}}}}(X_2,Y) = -& \displaystyle{\sum_{i=1}^{n_2n_y}} \left[ (\displaystyle{\sum_{j=1}^{n_0n_1}}\displaystyle{\sum_{k=1}^{n_u}} q_{(j-1)n_2n_yn_u + (i-1)n_u+k}) \log(\displaystyle{\sum_{j=1}^{n_0n_1}}\displaystyle{\sum_{k=1}^{n_u}} q_{(j-1)n_2n_yn_u + (i-1)n_u+k}) \right].
\end{align}

This formulation is directly exploited in Section~\ref{sec:PM} and in Section \ref{sec:coded-power-control}.

\subsection{Optimization problem for perfect monitoring}
\label{sec:PM}

In the case of perfect monitoring, for which Agent $2$ perfectly monitors Agent $1$'s actions and $Y=X_1$,
 the information constraints \eqref{eq:lower-bound-I} and \eqref{eq:lower-bound-II} coincide and 
\begin{align}
 \phi(q^n) &\stackrel{\vartriangle}{=} \displaystyle{\phi^{\mathrm{I}}(q^{n^{\mathrm{I}}})}
 = \displaystyle{\phi^{\mathrm{II}}(q^{n^{\mathrm{II}}})} = H_{q^{n}}(X_2) -  H_{q^{n}}(X_2|X_0) -  H_{q^{n}}(X_1|X_0,X_2)
\end{align}
with $q^n =(q_1,\dots,q_n)$, $n=n_0n_1n_2$. To further analyze the relationship between the vector of payoff values $w^n$ and an optimal joint distribution $q^n$, we explicitly express the \ac{KKT} conditions. The Lagrangian is
\begin{multline}
 \mathcal{L}(q^n,\lambda^n,\mu_0, \mu^{n_0}, \lambda_{\mathrm{IC}}) =
 - \sum_{i=1}^{n} w_i q_i + \lambda_i q_i + \mu_0\left[-1+\sum_{i=1}^{n} q_i\right] + \sum_{j=1}^{n_0} \mu_{j} \left[- \rho_{0i} + \sum_{i=1+(j-1)n_1n_2}^{j n_1 n_2} q_i \right] \\ 
 + \lambda_{\mathrm{IC}} \phi(q^n)
\end{multline}
where $\lambda^n = (\lambda_1, \dots, \lambda_{n})$, $\mu^{n_0} = (\mu_1, \dots, \mu_{n_0})$, and the subscript IC stands for information constraint.

A necessary and sufficient condition for a distribution $q^n$ to be an optimum point is that it is a solution of the
following system:
\begin{align}
& \forall \; i \in[1:n], \ \frac{\partial \mathcal{L}}{\partial q_i} = -w_i - \lambda_i + \mu_0 + \sum_{j=1}^{n_0} \mu_{j} \mathbbm{1}_{ \{1+n_1 n_2(j-1) \leq i \leq j n_1 n_2\} }  + \lambda_{\mathrm{IC}} \frac{\partial \phi}{\partial q_i}(q^n) = 0 \qquad 
\label{partialLAGg} \\
& q^n \text{ verifies }  (h), (i), (j)\\
&\forall \; i \in[1:n], \ \lambda_i \geq 0  \\
&\lambda_{\mathrm{IC}} \geq 0 \\
& \forall \; i \in[1:n], \  \lambda_i q_i = 0\\
&\lambda_{\mathrm{IC}} \phi(q^n) = 0
\end{align}
where
\begin{multline}
\forall i \in [1:n], \ \frac{\partial \phi}{\partial q_i}(q^n) =
\Bigg[-\sum_{k=1}^{n_0}\bigg( \mathbbm{1}_{\{1+(k-1)n_1n_2 \leq i \leq kn_1n_2\} } \log\sum_{j=1+(k-1)n_1n_2}^{k n_1n_2} q_j\bigg)\\
 -\sum_{k=1}^{n_2} \mathbbm{1}_{\{ i \in \{k,k+n_2,\dots,k+(n_0n_1-1)n_2 \} \} } \log\sum_{j=0}^{n_0n_1-1} q_{k+jn_2} -1 + \log q_i  \Bigg].
\end{multline}

In the following, we assume that there exists a permutation of $[1:n]$ such that the vector of payoff values $w^{n}$ after permutation of the components is strictly ordered. A couple of observations can then be made by inspecting the \ac{KKT} conditions above. First, if the expected payoff were only maximized under the constraints $(h)$ and $(k)$, the best joint distribution would be to only assign probability to the greatest element of the vector $w^n$; 
in other words the best $q^n$ would correspond to a vertex of the unit simplex $\Delta(\calX)$. However, as the distribution of the random state fixed by constraint $(j)$, at least $n_0$ components of $q^n$ have to be positive. It is readily verified that under constraints $(h),(j),$ and $(k)$, the optimal solution is that for each $x_0$ the optimal pair $(x_1,x_2)$ is chosen; therefore, $q^n$ possesses exactly $n_0$ positive components. This corresponds to the costless communication scenario. Now, in the presence of the additional information constraint $(\ell)$, the optimal solutions contain in general more than $n_0$ positive components because optimal communication between the two agents requires several symbols of $\mc{X}_1$ to be associated with a given realization of the state. In fact, as shown in the following proposition, there is a unique optimal solution under \tc{black}{mild assumptions}.

\begin{prop}\label{prop:uniq-strict-order} If there exists a permutation such that the payoff vector $w^n$ is strictly ordered, then the optimization problem (\ref{Optpb-Q-gen}) has a unique solution.  
\end{prop}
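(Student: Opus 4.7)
The objective is linear in $q$ and the feasible set defined by $(h)$–$(\ell)$ is convex and compact, so an optimum exists. My plan is to first analyze the \emph{relaxed} linear program obtained by dropping the information constraint $(\ell)$, exploit the strict ordering of $w^n$ to get a unique LP optimum, and then handle the original problem by distinguishing whether $(\ell)$ is active at the optimum. Strict convexity of $\phi^{\mathrm{I}}$ established in Lemma~\ref{lemma:convexity-of-phi} is then the key ingredient to rule out multiple boundary optima.

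Concretely, after using $(i)$ to write $q_i$ as $\Gamma_i$ times the appropriate marginal, the constraints $(h)$, $(i)$, $(j)$, $(k)$ describe—up to the deterministic $(y,u)$-dependence—a product, indexed by $x_0$, of scaled simplices $\{\bar{Q}(\cdot,\cdot\mid x_0)\geq 0,\ \sum_{x_1,x_2}\bar{Q}(x_1,x_2\mid x_0)=1\}$. The linear objective decouples across $x_0$ and is minimized on each simplex at the unique vertex corresponding to $\arg\max_{(x_1,x_2)} w(x_0,x_1,x_2)$, which exists and is unique because the strict ordering of $w^n$ ensures that the values $\{w(x_0,x_1,x_2)\}_{(x_1,x_2)}$ are pairwise distinct for every $x_0$. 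Hence the relaxed LP has a unique optimum $q^R$ realizing the costless-communication value.

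I would then split into two cases. If $\phi(q^R)\leq 0$, then $q^R$ is feasible for the original problem and attains a value that lower-bounds the constrained optimum, so $q^R$ is the unique solution. If instead $\phi(q^R)>0$, I first argue that every optimum $q^\star$ of the constrained problem must lie on the boundary $\phi(q^\star)=0$: otherwise, $(\ell)$ would be slack in a neighborhood of $q^\star$, the problem would locally coincide with the relaxed LP, and by the absence of strict local optima in a linear program $q^\star$ would be the global LP minimizer $q^R$, contradicting $\phi(q^R)>0$. Now, if two distinct optima $q^{\star 1}\neq q^{\star 2}$ existed, both with $\phi=0$, their midpoint $q^\star=\tfrac12(q^{\star 1}+q^{\star 2})$ would be feasible (convex set) and would achieve the optimal value (linearity of the objective); but strict convexity of $\phi^{\mathrm{I}}$ would give $\phi(q^\star)<\tfrac12\phi(q^{\star 1})+\tfrac12\phi(q^{\star 2})=0$, producing an optimum with $\phi<0$ and contradicting the previous claim. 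Hence uniqueness.

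The main obstacle lies in case II: strict convexity of $\phi^{\mathrm{II}}$ is not established in the paper, and the time-sharing convexification used to make the domain convex need not preserve strict convexity. The midpoint argument in the final step would then only yield $\phi(q^\star)\leq 0$, which is insufficient to reach a contradiction. A workable remedy is to replace the midpoint argument by a direct analysis of the achievable empirical distributions $\bar{Q}(x_0,x_1,x_2)$ (the marginal obtained after integrating out $U$ and $Y$), on which $\phi^{\mathrm{II}}$ can be expressed after optimizing over the auxiliary variable $U$, and to show that the resulting constraint surface remains strictly convex in $\bar{Q}$; the rest of the case split then goes through verbatim.
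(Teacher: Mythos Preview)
Your argument is correct and in fact somewhat cleaner than the paper's own proof. The paper proceeds via the KKT conditions: it shows that assuming $\lambda_{\mathrm{IC}}=0$ forces, by the strict ordering of $w^n$, the optimal $q^n$ to put mass on exactly one $(x_1,x_2)$ pair per value of $x_0$; this makes $X_1$ and $X_2$ deterministic functions of $X_0$, whence $\phi(q^n)=H_{q^n}(X_2)\geq 0$, which the paper argues is incompatible with feasibility. Hence $\lambda_{\mathrm{IC}}>0$, the Lagrangian is the sum of linear terms and $\lambda_{\mathrm{IC}}\,\phi$, and strict convexity of $\phi$ (Lemma~\ref{lemma:convexity-of-phi}) makes the Lagrangian strictly convex, yielding uniqueness.

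Your case split does the same work without KKT machinery: your relaxed-LP optimum $q^R$ is precisely the deterministic solution the paper obtains from $\lambda_{\mathrm{IC}}=0$, and your ``constraint must be active at every optimum'' step plays the role of $\lambda_{\mathrm{IC}}>0$ (indeed, by complementary slackness these are equivalent). The midpoint contradiction then exploits strict convexity of $\phi$ directly rather than through the Lagrangian. Your approach has the mild advantage of handling the boundary situation $\phi(q^R)\leq 0$ explicitly, which the paper's phrasing (``$H_{q^n}(X_2)<0$, which is impossible'') glosses over. On the other hand, the Lagrangian route would extend more naturally if one wanted structural information about the optimizer beyond uniqueness.

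Your concern about case~II is misplaced here: the proposition is stated in the perfect-monitoring subsection, where $\phi=\phi^{\mathrm{I}}=\phi^{\mathrm{II}}$ and Lemma~\ref{lemma:convexity-of-phi} applies directly; no separate treatment of an auxiliary variable $U$ or of the convexified constraint is required.
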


\begin{IEEEproof}
Assume $\lambda_{\mathrm{IC}}=0$ in the Lagrangian. Further assume that a candidate solution of the optimization problem $q^n$ has two or more positive components in a block of size $n_1 n_2$ associated with a given realization $x_0$ (see Table \ref{Tab:indexingU}). Then, there exist two indices $(i_1, i_2)$ such that $\lambda_{i_1}=0$ and $\lambda_{i_2}=0$. Consequently, the conditions on the gradient $\frac{\partial \mc{L}}{\partial q_i} = 0$ for $i\in\{i_1,i_2\}$ imply that $w_{i_1} = w_{i_2}$, which contradicts the assumption of $w^n$ being strictly ordered under permutation. Therefore, a candidate solution only possesses a single positive component per block associated with a given realization $x_0$, which means that $X_1$ and $X_2$ are deterministic functions of $X_0$. Hence, $H_{q^n}(X_2|X_0) = H_{q^n}(X_1|X_0X_2)=0$ and the information constraint reads $H_{q^n}(X_2) <0$, which is impossible. Hence, $\lambda_{\mathrm{IC}}>0$.

From Lemma \ref{lemma:convexity-of-phi}, we know that $\phi(q^n)$ is strictly convex. Since $\lambda_{\mathrm{IC}}>0$, the Lagrangian is the sum of linear functions and a strictly convex function. Since it is also continuous and optimized over a compact and convex set, there exists a maximum point and it is unique.
\end{IEEEproof}

Apart from assuming that $w^n$ can be strictly ordered, Proposition \ref{prop:uniq-strict-order} does not assume anything on the values of the components of $w^n$. In practice, for a specific problem it will be relevant to exploit the special features of the problem of interest to better characterize the relationship between the payoff function (which is represented by $w^n$) and the optimal joint probability distributions (which are represented by the vector $q^n$). This is one of the purposes of the next section.

\section{Coded power control}
\label{sec:coded-power-control}

We now exploit the framework previously developed to study power control in interference networks. In this context, the agents are the transmitters and the random state corresponds to the global wireless channel state, \idest all the channel gains associated with the different links between transmitters and receivers. \emph{\ac{CPC}} consists in embedding information about the global wireless channel state into transmit power levels themselves rather than using a dedicated signaling channel. Provided that the power levels of a given transmitter can be observed by the other transmitters, the sequence of power levels can be used to coordinate with the other transmitters. Typical mechanisms through which agents may observe power levels include sensing, as in cognitive radio settings, or feedback, as often assumed in interference networks. One of the salient features of coded power control is that interference is directly managed in the radio-frequency domain and does not require baseband detection or decoding, which is useful in systems such as heterogeneous networks. 
The main goal of this section is to assess the limiting performance of coded power control and its potential performance gains over other approaches, such as the Nash equilibrium power control policies of a given single-stage non-cooperative game. This comparison is relevant since conventional distributed power control algorithms, such as the iterative water-filling algorithm, do not exploit the opportunity to exchange information through power levels or vectors to implement a better solution\tc{black}{, \eg that would Pareto-dominate the Nash equilibrium power control policies}.

\subsection{Coded power control over interference channels}

We first consider an interference channel with two transmitters and two receivers, which we then specialize to the multiple-access channel in Section~\ref{sec:coord-scheme} to develop and analyze an explicit non-trivial power control code. 

By denoting $g_{ij}$ the channel gain between Transmitter $i$ and Receiver $j$, each realization of the global wireless channel state is given by
\begin{equation}
\label{eq:globa_channel_state}
x_0=(g_{11},g_{12},g_{21},g_{22}),
\end{equation}
where $g_{ij}\in \mc{G}$, $|\mc{G}|<\infty$; it is further assumed that the channel gains $g_{ij}$ are independent and we set $\mc{X}_0=\mc{G}^4$. Each alphabet $\mc{X}_i, |\mc{X}_i|<\infty$, $i\in\{1,2\}$, represents the set of possible power levels for Transmitter $i$. Assuming that the sets are discrete is of practical interest, as there exist wireless communication standards in which the power can only be decreased or increased by step and in which quantized wireless channel state information is used. In addition, the use of discrete power levels may not induce any loss of optimality~\cite{gjendemsj-twc-2008} w.r.t. the continuous case, as further discussed in Section~\ref{sec:influence-payoff}. 

We consider three stage payoff functions $w^{\mathrm{rate}}$, $w^{\mathrm{SINR}}$, and $w^{\mathrm{energy}}$, which respectively represent the sum-rate, the sum-\ac{SINR}, and the sum-energy efficiency. Specifically,
\begin{equation}
\begin{array}{cccc}
w^{\mathrm{rate}}: &
\calX
& \rightarrow & \mathbb{R}_+\\
& (x_0,x_1,x_2)  & \mapsto & \ds{\sum_{i=1}^2} \log_2 \bigg(1+
\underbrace{\frac{g_{ii} x_{i}}{\sigma^2 + g_{-ii} x_{-i}}}_{\mathrm{SINR}_i} \bigg)\label{eq:SINR_payoff}
\end{array},
\end{equation}
\begin{equation}
\begin{array}{cccc}
w^{\mathrm{SINR}}: &  \calX
& \rightarrow & \mathbb{R}_+\\
& (x_0,x_1,x_2)  & \mapsto & \ds{\sum_{i=1}^2}
\frac{g_{ii} x_{i}}{\sigma^2 + g_{-ii} x_{-i}}
\end{array},
\end{equation}
\begin{equation}
\label{eq:energy-payoff}
\begin{array}{cccc}
w^{\mathrm{energy}}: & \calX
& \rightarrow & \mathbb{R}_+\\
& (x_0,x_1,x_2)  & \mapsto & \ds{\sum_{i=1}^2} \frac{F\left(1+
\frac{g_{ii} x_{i}}{\sigma^2 + g_{-ii} x_{-i}} \right)}{x_i}
\end{array}.
\end{equation}
The notation $-i$ stands for the transmitter other than $i$; $\sigma^2$ corresponds to the reception noise level; $F:\mathbb{R}_+ \rightarrow [0,1]$ is a sigmoidal and increasing function that typically represents the block success rate, see \eg~\cite{meshkati-spmag-2007,belmega-tsp-2011}. The function $F$ is chosen so that $w^{\mathrm{energy}}$ is continuous and has a limit when $x_i\rightarrow 0$. The motivation for choosing these three payoff functions is as follows.
\begin{itemize}
\item The sum-rate is a common measure of performance for distributed power control in wireless networks.
\item The sum-SINR is not only a linear approximation of the sum-rate but also an instance of sum-payoff function that is more sensitive to coordination, since the dependency with respect to the SINR is linear and not logarithmic.
\item The sum-energy efficiency has recently gathered more attention as a way to study a tradeoff between the transmission benefit (namely, the net data rate which is represented by the numerator of the individual payoff) and the transmission cost (namely, the transmit power which is represented by the denominator of the individual payoff). As pointed out in \cite{varma-tvt-2015}, energy-efficiency can even represent the energy consumed in a context with packet re-transmissions, indicating its relevance for green communications. Finally, as simulations reveal next, total energy-efficiency may be very sensitive to coordination.
\end{itemize}

Finally, we consider the following three possible observation structures.
\begin{itemize}
  \item \emph{Perfect monitoring}, in which Agent $2$ directly observes the actions of Agent $1$, \idest $Y =X_1$;
  \item \emph{\acs{BSC} monitoring}, in which Agent $2$ observes the actions of Agent $1$ through a \ac{BSC}. \tc{black}{The channel is given by the alphabets $\mathcal{X}_1 = \{P_{\min}, P_{\max} \} $,  $\mathcal{Y} = \{P_{\min}, P_{\max} \} $, and the transition probability: $ \mathrm{P}(Y=y | X_1= x_1)  = 1 -p $ if $y = x_1$ and $ \mathrm{P}(Y=y | X_1 = x_1)  = p $ if $y \neq x_1$, $0 \leq p \leq 1$};
  \item \emph{Noisy \ac{SINR} feedback monitoring}, in which Agent $2$ observes a noisy version of the \ac{SINR} of Agent $1$ as illustrated in Fig.~\ref{fig:SINR-feedback}; this corresponds to a scenario in which a feedback channel exists between Receiver $2$ and Transmitter $2$. \tc{black}{The channel is given by the alphabets $\mathcal{X}_0 = \mathcal{G}^4$, $\mathcal{X}_1 = \{0, P_{\max} \}$, $\mathcal{X}_2 = \{0, P_{\max} \}$, $\mathcal{Y} = \{\mathrm{SINR}^{(1)}, ..., \mathrm{SINR}^{(N)} \}$, and the transition probability $\mathrm{P}(Y= \mathrm{SINR}^{(n)} |  (X_0,X_1,X_2) = (x_0,x_1,x_2) )  =   \mathrm{P}(Y= \mathrm{SINR}^{(n)} | Y_0 = \mathrm{SINR}^{(m)}) \delta_{ \mathrm{SINR}^{(m)} - \gamma(x_0,x_1,x_2) } $, where {$\gamma$ is the function given by the SINR definition{~\eqref{eq:SINR_payoff}} and
        \begin{align*}
           \mathrm{P}(Y= \mathrm{SINR}^{(n)} | Y_0 = \mathrm{SINR}^{(m)})=\left\{
          \begin{array}{l}
            1-e\text{ if $m=n$,}\\
            e \text{ if $m=1$ and $n=2$, or $m=N$ and $n=N-1$,}\\
            \frac{e}{2}\text{ else.}
          \end{array}
          \right.
        \end{align*}
      }}
\end{itemize}

\begin{figure}[!ht]
\begin{center}
\includegraphics{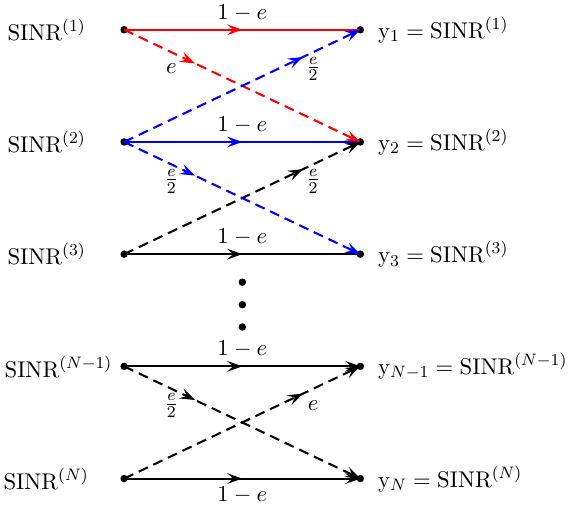}
\caption{The signal observed by Transmitter $2$ lies in an $N-$symbol alphabet
 \idest $|\mc{Y}|=N$. The symbols correspond to possible values for
 the \ac{SINR} at the receiver
 which is associated with Transmitter $2$. Transmitter $2$ observes the
 actual value of its \ac{SINR} with probability $1-e$ while there is probability
 $e\geq0$ that a neighboring \ac{SINR} is observed. In the simulations,
 $e=0.1$.}
\label{fig:SINR-feedback}
\end{center}
\end{figure}

The performance of coded power control will be assessed against that of the following three benchmark power control policies.
\begin{itemize}
\item \emph{\acf{NPC} policy}. In such a policy, each transmitter aims at maximizing an individual stage payoff function $u_i(x_0,x_1,x_2)$. In the sum-rate, sum-\ac{SINR}, and sum-energy efficiency cases, these individual stage payoff-functions are respectively given by $u_i(x_0,x_1,x_2) = \log_2(1+ \mathrm{SINR}_i)$, $u_i(x_0,x_1,x_2) = \mathrm{SINR}_i$, and $u_i(x_0,x_1,x_2) = \frac{F(\mathrm{SINR}_i)}{x_i}$. In the sum-rate and sum-\ac{SINR} cases, the unique Nash equilibrium is $(x_1^{\mathrm{NE}},x_2^{\mathrm{NE}}) =(P_{\max}, P_{\max})$, irrespectively of the value of $x_0$, $P_{\max}$ being the maximal power level for the transmitters. In the sum-energy efficiency case, the unique non-trivial Nash equilibrium may be determined numerically and generally requires some knowledge of $x_0$, depending on how it is implemented (see e.g., \cite{lasaulce-book-2011}).
\item \emph{\acf{SPC} policy}. This policy corresponds to a basic coordination scheme in which Transmitter $1$ optimizes its power knowing that Transmitter $2$ transmits at full power; \ac{SPC} requires the knowledge of the current wireless channel state realization at Transmitter $1$. Specifically, $x_2= P_{\max}$, $x_1^{\dag} \in \ds{\arg \max_{x_1}} \;w^{\text{r}}(x_0, x_1, P_{\max} )$, $\mathrm{r}\in \{\mathrm{rate}, \mathrm{SINR}, \mathrm{energy} \}$. \tc{black}{SPC is a rather intuitive scheme, which also corresponds to the situation in which Transmitter $1$ only knows the past and current realizations of the state; this scheme can in fact be optimal if Agent $1$ and Agent 2 only know $X_0^{n}$ and $Y^{n-1}$, respectively, with Agent $2$ choosing the best constant action. Therefore, comparisons with SPC allow us to assess the potential gain of an advanced coding scheme and the value of knowing the future.} 

 \item \emph{\acf{CCPC} policy.} This policy corresponds to the situation in which transmitters may communicate at not cost, so that they may jointly optimize their powers to achieve the maximum of the payoff function $w^{\text{r}}$ at every stage. In such a case there is no information constraint, and the performance of \ac{CCPC} provides an upper bound for the performance of all other policies. 
\end{itemize}

The communication \ac{SNR} is defined as
\begin{equation}
\text{SNR(dB)} \triangleq 10\log_{10}\frac{P_{\max}}{\sigma^2}.
\end{equation}

\subsection{Influence of the payoff function}
\label{sec:influence-payoff}

The objective of this subsection is to numerically assess the relative performance gain of \ac{CPC} over \ac{SPC} in the case of perfect monitoring. We assume that the channel gains $g_{ij}\in\{g_{\min}, g_{\max}\}$ are Bernoulli distributed $g_{ij} \sim \mc{B}(p_{ij})$ with $p_{ij}\triangleq \mathrm{P}(g_{ij} = g_{\min})$; with our definition of $X_0$ in~\eqref{eq:globa_channel_state}, this implies that $|\mc{X}_0|=16$. All numerical results in this subsection are obtained for $g_{\min} = 0.1$, $g_{\max}=2$ and $(p_{11},p_{12},p_{21},p_{22}) = (0.5,0.1,0.1,0.5)$. The sets of transmit powers $\mc{X}_1$, $\mc{X}_2$ are both assumed to be the same alphabet of size four $\{P_1,P_2,P_3,P_4 \}$, with $P_1=0, P_2=\frac{P_{\max}}{3},   P_3=\frac{2P_{\max}}{3}, P_4 = P_{\max}$. The quantity $P_{\max}$ is given by the operating SNR and $\sigma^2=1$. The function $F$ is chosen as a typical instance of the efficiency function used in~\cite{belmega-tsp-2011}, \idest
   \begin{equation}
   F(x) = \exp \left(-\frac{2^{0.9}-1}{x}\right).
   \end{equation}
For all $ \mathrm{r}\in \{\mathrm{rate}, \mathrm{SINR}, \mathrm{energy} \}$, the relative performance gain with respect to the \ac{SPC} policy is
\begin{equation}
\text{Relative gain (\%)} =
\left(\frac{ \mathbb{E}_{\ol{Q}^\star}(w^{\mathrm{r}})}{\mathbb{E}_{\rho_0}( \ds{\max_{x_1}}
 \,   w(x_0, x_1, P_{\max} ))  } -1\right) \times 100
\end{equation}
where $\ol{Q}^\star$ is obtained by solving optimization problem (\ref{Optpb-Q-gen}) under perfect monitoring. This optimization is numerically performed using the Matlab function {\small \textsf{fmincon}}. Fig.~\ref{fig92b} illustrates the relative performance gain in \% \ac{wrt} the \ac{SPC} policy for the sum-energy efficiency, while Fig. \ref{fig92c} illustrates it for the sum-\ac{SINR} and sum-rate. 

As shown in Fig.~\ref{fig92b}, our simulation results suggest that \ac{CPC} provides significant performance gains for the sum-energy efficiency. This may not be surprising, as the payoff function~\eqref{eq:energy-payoff} is particularly sensitive to the lack of coordination; in fact, as the transmit power becomes high, $\frac{F(\mathrm{SINR}_i)}{x_i} \rightarrow \frac{1}{x_i}$, which means that energy efficiency decreases rapidly. As shown in Fig.~\ref{fig92c}, the performance gains of \ac{CPC} for the sum-\ac{SINR} and the sum-rate are more moderate, with gains as high as 43\% for the sum-\ac{SINR} and 25\% for the sum-rate; nevertheless, such gains are still significant, and would be larger if we used \ac{NPC} instead of \ac{SPC} as the reference case, as often done in the literature of distributed power control. \tc{black}{The shape of the sum-rate curve in Fig.~\ref{fig92c} can be explained intuitively. At low SNR, interference is negligible and the sum-rate is maximized when both transmitters use full power, which is also what SPC does in this regime. At high SNR, SPC is not optimal but still provides a large sum-rate, which is comparable to that provided by the best CPC scheme. Between these regimes, advanced coordination schemes are particularly useful, which explains the peak at intermediate SNR.}

\begin{figure}[htbp]
\begin{center}
  \includegraphics[width=0.70\textwidth]{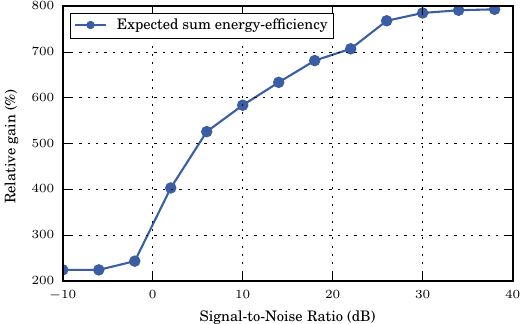}
  \end{center}
\caption{Relative sum-energy gain of coded power control (CPC) with perfect monitoring over semi-coordinated power control (SPC).}
\label{fig92b}
\end{figure}

\begin{figure}[htbp]
\begin{center}
  \includegraphics[width=0.70\textwidth]{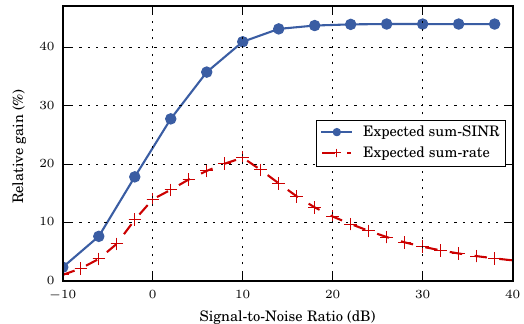}
  \end{center}
\caption{Relative sum-\ac{SINR} gain and sum-rate gain of coded power control (CPC) with perfect monitoring over semi-coordinated power control (SPC).}
\label{fig92c}
\end{figure}

We conclude this subsection by providing the marginals $\ol{Q}_{X_1}^{\star}(x_1)=\ds{\sum_{x_0,x_2}} \ol{Q}^{\star}(x_0,x_1,x_2)$, $\ol{Q}_{X_2}^{\star}(x_2)=\ds{\sum_{x_0,x_1}}
 \ol{Q}^{\star}(x_0,x_1,x_2)$, and joint distribution $\ol{Q}_{X_1 X_2}^{\star}(x_1,x_2)=\ds{\sum_{x_0}} \ol{Q}^{\star}(x_0,x_1,x_2)$ of the optimal joint distribution for \ac{CPC} and \ac{CCPC} in Table~\ref{Tabdistributions-WO-IC} and Table~\ref{Tabdistributions}, respectively. In both cases, the results correspond to the maximization of the sum-rate payoff function $w^{\mathrm{rate}}$ and $\mathrm{SNR}=10$ dB. Table~\ref{Tabdistributions-WO-IC} shows that, without information constraint, the sum-rate is maximized when the transmitters  correlate their power levels so that only three pairs of transmit power levels are used out of $16$. This result is consistent with~\cite{gjendemsj-twc-2008}, which proves that, for interference channels with two transmitter-receiver pairs, there is no loss of optimality in terms of $w^{\mathrm{rate}}$ by operating over a binary set  $\{0,P_{\max}\}$ instead of a continuous interval $[0, P_{\max}]$. Interestingly, as seen in Table \ref{Tabdistributions}, the three best configurations of the \ac{CCPC} policy are exploited $44.3+42.9+2.1=89.3\%$ of the time in the \ac{CPC} policy, despite the presence of communication constraints between the two transmitters.

\begin{table*}[!t]
\caption{Optimal marginal and joint distributions (expressed in \%) for the sum-rate payoff function of the \ac{CCPC} policy, with $\mathrm{SNR}=10$ dB and
with four possible transmit power levels $\left\{0,\frac{10}{3}, \frac{20}{3},10 \right\}$.}
\label{Tabdistributions-WO-IC}
\begin{equation*}
{\normalsize
\begin{tabular}{|c|c|c|c|c|}
\hline
$(\tc{red}{\ol{Q}_{X_1}^{\star}(x_1)}, \ol{Q}_{X_2}^{\star}(x_2), {\bf \ol{Q}_{X_1 X_2}^{\star}(x_1,x_2)}) $ &
 $x_1= 0$ & $x_1= \frac{10}{3}$
&  $x_1 = \frac{20}{3}$ & $x_1=10$ \\
 in \%   & & & & \\
\hline
$x_2= 0\tc{white}{0}$ & (\tc{red}{47.5},47.5,{\bf00.0}) & (\tc{red}{00.0},47.5,{\bf00.0}) &
(\tc{red}{00.0},47.5,{\bf00.0}) &
 (\tc{red}{52.5},47.5,{\bf47.5})\\
\hline
$x_2= \frac{10}{3}$ & (\tc{red}{47.5},00.0,{\bf00.0}) &
(\tc{red}{00.0},00.0,{\bf00.0}) & (\tc{red}{00.0},00.0,{\bf00.0}) &
(\tc{red}{52.5},00.0,{\bf00.0})\\
 \hline
$x_2 = \frac{20}{3}$ & (\tc{red}{47.5},00.0,{\bf00.0}) & (\tc{red}{00.0},00.0,{\bf00.0}) &
 (\tc{red}{00.0},00.0,{\bf00.0})
& (\tc{red}{52.5},00.0,{\bf00.0})\\
 \hline
 $x_2 = 10$ & (\tc{red}{47.5},52.5,{\bf47.5}) &
 (\tc{red}{00.0},52.5,{\bf00.0}) & (\tc{red}{00.0},52.5,{\bf00.0}) &
 (\tc{red}{52.5},52.5,\textbf{05.5}) \\
 \hline
\end{tabular}}
\end{equation*}
\end{table*}

\begin{table*}[!t]
\caption{Optimal marginal and joint distributions (expressed in \%) for the sum-rate payoff function of the \ac{CPC} policy, with $\mathrm{SNR}=10$ dB and
with four possible transmit power levels $\left\{0,\frac{10}{3}, \frac{20}{3},10 \right\}$.}
\label{Tabdistributions}
\begin{equation*}
{\normalsize
\begin{tabular}{|c|c|c|c|c|}
\hline
$(\tc{black}{\ol{Q}_{X_1}^{\star}(x_1)}, \ol{Q}_{X_2}^{\star}(x_2), {\bf \ol{Q}_{X_1 X_2}^{\star}(x_1,x_2)}) $ &
 $x_1= 0$ & $x_1= \frac{10}{3}$
&  $x_1 = \frac{20}{3}$ & $x_1=10$ \\
 in \%   & & & & \\
\hline
$x_2= 0\tc{white}{0}$ & (\tc{black}{44.4},50.4,{\bf00.1}) & (\tc{black}{02.6},50.4,{\bf00.9}) &
(\tc{black}{08.0},50.4,{\bf06.5}) &
 (\tc{black}{45.0},50.4,{\bf42.9})\\
\hline
$x_2= \frac{10}{3}$ & (\tc{black}{44.4},00.0,{\bf00.0}) &
(\tc{black}{02.6},00.0,{\bf00.0}) & (\tc{black}{08.0},00.0,{\bf00.0}) &
(\tc{black}{45.0},00.0,{\bf00.0})\\
 \hline
$x_2 = \frac{20}{3}$ & (\tc{black}{44.4},00.0,{\bf00.0}) & (\tc{black}{02.6},00.0,{\bf00.0}) &
 (\tc{black}{08.0},00.0,{\bf00.0})
& (\tc{black}{45.0},00.0,{\bf00.0})\\
 \hline
 $x_2 = 10$ & (\tc{black}{44.4},49.6,{\bf44.3}) &
 (\tc{black}{02.6},49.6,{\bf01.7}) & (\tc{black}{08.0},49.6,{\bf01.5}) &
 (\tc{black}{45.0},49.6,\textbf{02.1}) \\
 \hline
\end{tabular}}
\end{equation*}

\end{table*}


\subsection{Influence of the observation structure}
\label{sec:influence-obs-struct}

In this subsection, we focus on the observation structure defined by case I in~\eqref{eq:strategies-I} and we restrict our attention to the sum-rate payoff function $w^{\mathrm{rate}}$. The set of powers is restricted to a binary set $\mc{X}_1=\mc{X}_2=\{0, P_{\max}\}$, but unlike the study in Section \ref{sec:influence-payoff}, we do not limit ourselves to perfect monitoring. Fig. \ref{fig103} shows the relative performance gain \ac{wrt} the \ac{SPC} policy as a function of \ac{SNR} for three different observation structures. The performance of \ac{CPC} for \ac{BSC} monitoring is obtained assuming a probability of error of $5\%$, \idest $Z_1\sim\mc{B}(0.05)$, $\mathrm{P}(Z_1=1)=0.05$. The performance of \ac{CPC} for noisy \ac{SINR} feedback monitoring is obtained assuming $e=0.1$; in this case, it can be checked that the \ac{SINR} can take one of $N=7$ distinct values. 

Fig.~\ref{fig103} suggests that \ac{CPC} provides a significant performance gain over \ac{SPC} over a wide range of operating SNRs irrespective of the observation structure. Interestingly, for \ac{SNR} $=10$ dB, the relative gain of \ac{CPC} only drops from $22\%$ with perfect monitoring to $18\%$ with BSC monitoring, which suggest that for observation structures with typical noise levels the benefits of \ac{CPC} are somewhat robust to observation noise. Similar observations can be made for \ac{SINR} feedback monitoring. Note again that one would obtain higher performance gains by considering \ac{NPC} as the reference policy or by considering scenarios with stronger interference. 

\begin{figure}[htbp]
\hspace{-1.2cm}
\begin{center}
\includegraphics[width=0.70\textwidth]{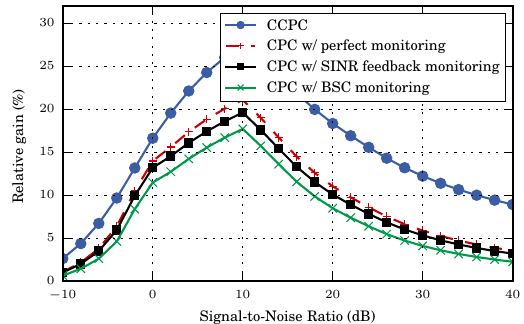}
\end{center}
\caption{Relative sum-rate gain of costless communication power control (CCPC) and coded power control (CPC) over semi-coordinated power control (SPC) under various monitoring assumptions in the observation structure of Case I.}
\label{fig103}
\end{figure}

\subsection{Influence of the wireless channel state knowledge}
\label{sec:influence-X0}
In this subsection, we restrict our attention to \ac{CPC} with BSC monitoring with the same parameters as in Section \ref{sec:influence-obs-struct}, but we consider both Case I and
Case II defined in \eqref{eq:strategies-I} and~\eqref{eq:strategies-II}, respectively. The results for Case II are obtained assuming that $|\mc{U}|=10$. While we already know that the performance of \ac{CPC} is the same in Case I and Case II with perfect monitoring, the results in Fig.~\ref{fig104} suggest that, for typical values of the observation noise, not knowing the past realizations of the global wireless channel state at Transmitter 2 only induces a small performance loss. 

\begin{figure}[htbp]
  \centering
  \includegraphics[width=0.70\textwidth]{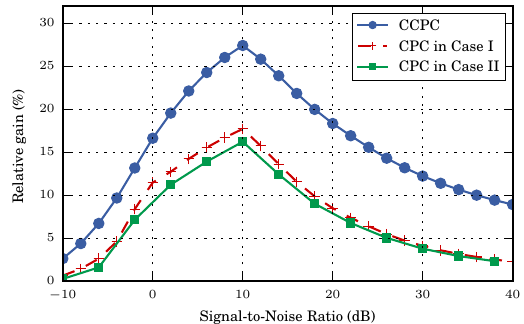}
\caption{Relative sum-rate gain of coded power control (CPC) and costless communication power control (CCPC) over semi-coordinated power control (SPC) for the binary symmetric channel (BSC) monitoring in the observation structure of Case I and Case II.}
\label{fig104}
\end{figure}

\subsection{Influence of the coordination scheme}
\label{sec:coord-scheme}

In this last subsection, we assess the benefits of \ac{CPC} for an explicit code that operates over blocks of length $n=3$. To simplify the analysis and clarify the interpretation, several 
assumptions are made. First, we consider a multiple-access channel, which is a special case of the interference channel studied earlier with two transmitters and a single receiver, so that the global wireless channel state comprises only two components $(g_1,g_2)$. Second, we assume that the global wireless channel state $X_0$ takes values in the binary alphabet $\mc{X}_0\in\left\{(g_{\min}, g_{\max}), (g_{\max}, g_{\min}) \right\}$, and is distributed according to Bernoulli random variable $\mc{B}(p)$ with $p = \mathrm{P}\big(X_0 =(g_{\min}, g_{\max})\big)$. In the remaining of this subsection, we identify the realization $(g_{\min}, g_{\max})$ with ``0'' and $(g_{\max}, g_{\min})$ wth ``1,'' so that we may write $\calX_0=\{0,1\}$. Third, we assume that the transmitters may only choose power values in $\{P_{\mathrm{min}},P_{\mathrm{max}}\}$, and we identify power $P_{\mathrm{min}}$ with ``0'' and power $P_{\mathrm{max}}$ with ``1'', so that we may also write $\calX_1=\calX_2=\{0,1\}$. Finally, we consider the case of perfect monitoring and we restrict our attention to the sum-\ac{SINR} payoff function $w^{\mathrm{SINR}}$. 

The values of the payoff function used in numerical simulations are provided in Fig. \ref{fig-PCGame} as the entries in a matrix. Each matrix corresponds to a different choice of the wireless channel state $x_0$; in each matrix, the choice of the row corresponds to the action $x_1$ of Transmitter $1$, which the choice of the column corresponds to the action $x_2$ of Transmitter $2$.

\begin{figure}[htbp]
\begin{center}
\includegraphics{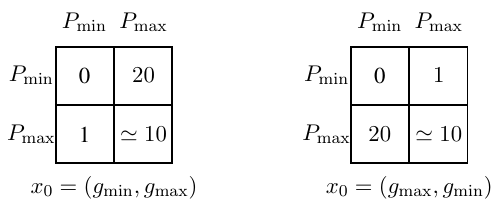}
 \caption{Payoff matrix of $w^{\textrm{SINR}}$ for power control over multiple-access channel. Numerical values of the payoff correspond  to $g_{\mathrm{min}}=0.1$, $g_{\mathrm{max}}=2$, $\sigma^2=1$, $P_{\mathrm{min}}=0$, $P_{\mathrm{max}}=10$. \tc{black}{For example, for the left matrix, when the power profile is $(p_1, p_2) = (P_{\max}, P_{\max})$ the individual SINRs are given by: $(\mathrm{SINR}_1, \mathrm{SINR}_2) = \left( \frac{ P_{\max} g_{\min}}{1 + P_{\max} g_{\max}},  \frac{P_{\max} g_{\max}}{1 + P_{\max} g_{\min}} \right)$.}}
\label{fig-PCGame} 
\end{center}
\end{figure}

The coordination code of length $3$ that we develop next can be seen as a separate source channel code, which consists of a source code with distortion and a channel code with side information. The source encoder and decoder are defined by the mappings
\begin{equation}
\begin{array}{cccc}
  f_{\mathrm{S}}: & \calX_0^3 & \rightarrow & \{m_0,m_1\} \\
  & \ul{x}_0 & \mapsto  & i \\
\end{array}, 
\end{equation}
\begin{equation}
\begin{array}{cccc}
  g_{\mathrm{S}} : &  \{m_0,m_1\} &\rightarrow \calX_2^3 \\
  & i &\mapsto & \ul{x}_2
 \end{array}.
\end{equation}
\tc{black}{    Note that the chosen source code only uses $2$ messages $\{m_0,m_1\}$ to represent the $8$ possible sequences $\ul{x}_0$. One of the benefits of this restriction is that it becomes computationally feasible to find the best two-message code by brute force enumeration. Finding a more systematic low-compelxity design approach to coordinatinon codes goes beyond the scope of the present work.} The exact choice of $f_{\mathrm{S}}$ and $g_{\mathrm{S}}$ is provided after we describe the channel code. 

In each block $b$, Transmitter 1's channel encoder implements the mapping
\begin{equation}
\ul{x}_1^{(b)} = f_{\mathrm{C}}(\ul{x}_0^{(b)}, \ul{x}_2^{(b)}, i_{b+1})
\end{equation}
where $i_{b+1}=f_{\mathrm{S}}(\ul{x}_0^{(b+1)})$ is the index associated with the sequence $\ul{x}_2^{(b+1)}$. The idea behind the design of the channel encoder $f_{\mathrm{C}}$ is the following. If Transmitter $1$ did not have to transmit the index $i_{b+1}$, its optimal encoding would be to exploit its knowledge of $(x_0^3(b),x_2^3(b))$ to choose the sequence $x_1^3(b)$ resulting in the highest average payoff in block $b$. However, to communicate the index $i_{b+1}$, Transmitter 1 will instead choose to transmit the sequence $x_1^3(b)$ with the highest average payoff in block $b$ if $i_{b+1}=m_0$, or the sequence $\ul{x}_1^{(b)}$ with the second highest average payoff in block $b$ if $i_{b+1}=m_1$. Note that Transmitter 2 is able to perfectly decode this encoding given its knowledge of $\ul{x}_0^{(b)}$, $\ul{x}_2^{(b)}$, and $\ul{x}_1^{(b)}$ at the end of block $b$. Formally, $f_{\mathrm{C}}$ is defined as follows. 
The sequence $\ul{x}_1 $ is chosen as
 \begin{equation}
 \ul{x}_1 = \ul{x}'_1 \oplus \ul{d}
 \end{equation}
where the modulo-two addition is performed component-wise,
 \begin{align}
 \ul{x}'_1 & \in \arg \max_{\ul{x}_1 \in \mc{X}_1^3}  \sum_{n=1}^3  w^{\mathrm{SINR}}(x_{0,n}, {x}_{1,n}, x_{2,n}),\\
 \ul{d} &=(0,0,0)\text{ if $i_{b+1}=m_0$},\\
 d^3& \in \mathop{\arg\max}_{\ul{d} \text{ s.t. }\Omega(\ul{d})=1}  \sum_{n=1}^3  w^{\mathrm{SINR}}(x_{0,n}, {x}'_{1,n}\oplus d_{n}, x_{2,n})\; \text{ if $i_{b+1}=m_1$}
 \end{align}
where $\Omega$ is the Hamming weight function that is, the number of ones in the sequence $\ul{d} \in \{0,1\}^3$. If the argmax set is not a singleton set, we choose the sequence with the smallest Hamming
weight. 

To complete the construction, we must specify how the source code is designed. Here, we choose the mappings $f_{\mathrm{S}}$ and $g_{\mathrm{S}}$ that maximize the expected payoff $\mathbb{E}(w^{\mathrm{SINR}})$ knowing the operation of the channel code. The source code resulting from an exhaustive search is given in Table \ref{Tab:Source code}, and the corresponding channel code is given in Table~\ref{Tab:Channel code}. The detailed expression of the expected payoff required for the search is provided in Appendix~\ref{app:expected-payoff}.

\begin{table}[h]
\caption{Proposed source coding and decoding for $p=\frac{1}{2}$.}\label{Tab:Source code}
\begin{center}
  \begin{tabular}[]{|c|cccccccc|}
    \hline
    $x_0^3$& 000& 001& 010& 011& 100& 101& 110&111\\\hline
    Index $i=f_{\mathrm{S}}(x_0^3)$ & $m_0$ & $m_0$ & $m_0$& $m_0$& $m_0$& $m_0$& $m_1$& $m_1$\\\hline
    $g_{\mathrm{S}}(i)$& 111 & 111 & 111 & 111 & 111 & 111 & 001 & 001\\\hline
  \end{tabular}

\end{center}
\end{table}

\begin{table}[h]
\caption{Proposed channel coding for $p=\frac{1}{2}$.}\label{Tab:Channel code}
\begin{center}
  \begin{tabular}{|c|cc|cc|cc|cc|cc|cc|cc|cc|}
\hline    $x_0^3(b)$ &\multicolumn{2}{c|}{$000$}&\multicolumn{2}{c|}{$001$}&\multicolumn{2}{c|}{$010$}&\multicolumn{2}{c|}{$011$}&\multicolumn{2}{c|}{$100$}&\multicolumn{2}{c|}{$101$}&\multicolumn{2}{c|}{$110$}&\multicolumn{2}{c|}{$111$}\\\hline
    $x_2^3(b)$ &\multicolumn{2}{c|}{$111$}&\multicolumn{2}{c|}{$111$}&\multicolumn{2}{c|}{$111$}&\multicolumn{2}{c|}{$111$}&\multicolumn{2}{c|}{$111$}&\multicolumn{2}{c|}{$111$}&\multicolumn{2}{c|}{$001$}&\multicolumn{2}{c|}{$001$}\\\hline
    $i_{b+1}$    &$m_0$&$m_1$&$m_0$&$m_1$&$m_0$&$m_1$&$m_0$&$m_1$&$m_0$&$m_1$&$m_0$&$m_1$&$m_0$&$m_1$&$m_0$&$m_1$\\\hline
    $x_1^3(b)$&$000$&$001$&$001$&$000$&$010$&$000$&$011$&$001$&$100$&$000$&$101$&$001$&$110$&$111$&$111$&$110$\\\hline
  \end{tabular}


\end{center}
\end{table}
The proposed codes admit to an intuitive interpretation. For instance, the first line of Table \ref{Tab:Channel code} indicates that if the channel is bad for Transmitter 1 for the three stages of block $b$, then Transmitter $1$ remains silent over the three stages of the block while  Transmitter $2$ transmits at all three stages. In contrast, the last line of Table \ref{Tab:Channel code} shows that if the channel is good for Transmitter 1 for the three stages of block $b$, then Transmitter $1$ transmit at all stages while Transmitter $2$ remains silent two thirds of the time. While this is suboptimal for this specific global wireless channel state realization, this is required to allow coordination and average optimality of the code. 

To conclude this section, we compare the performance of this short code with the best possible performance that would be obtained with infinitely long codes. As illustrated in Fig.~\ref{fig111}, while the performance of the short code suffers from a small penalty compared to that of ideal codes with infinite block length, it still offers a significant gain \ac{wrt} the \ac{SPC} policy and it outperforms the \ac{NPC} policy.

\begin{figure}[htbp]
  \centering
  \includegraphics[width=0.70\textwidth]{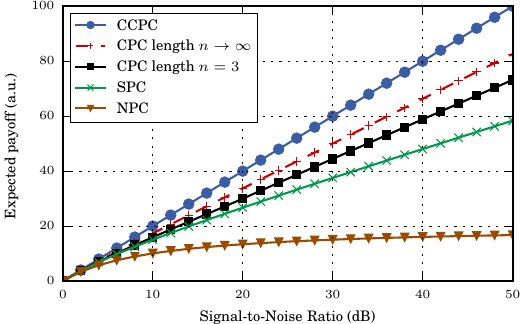}
\caption{Expected payoff versus \ac{SNR} for different power control policies.}
\label{fig111}
\end{figure}

\section{Conclusion}
\label{sec:conclusion}

In this paper, we adopted the view that distributed control policies or resource allocation policies in a network are joint source-channel codes. Essentially, an agent of a distributed network may convey its knowledge of the network state by \emph{encoding} it into a sequence of actions, which can then be \emph{decoded} by the agents observing that sequence. As explicitly shown in Section \ref{sec:coord-scheme}, the purpose of such ``coordination codes'' is neither to convey information reliably nor to meet a requirement in terms of maximal distortion level, but to provide a high expected payoff. Consequently, coordination codes must implement a trade-off between sending information about the future realizations of the network state, which plays the role of an information source and is required to coordinate future actions, and achieving an acceptable payoff for the current state of the network. Considering the large variety of payoff functions in control and resource allocation problems, an interesting issue is whether universal codes performing well within classes of payoff functions can be designed.

Remarkably, since a distributed control policy or resource allocation policy is interpreted as a code, Shannon theory  naturally appears to measure the efficiency of such policies. While the focus of this paper was limited to a small network of two agents, the proposed methodology to derive the best coordination performance in a distributed network is much more general. The assumptions made in this paper are likely to be unsuited to some application scenarios, but provide encouraging preliminary results to further research in this direction. For example, as mentioned in Section \ref{sec:introduction}, a detailed comparison between coded power control and iterative water-filling like algorithms would lead to consider a symmetric observation structure while only an asymmetric structure is studied in this paper. The methodology to assess the performance of good coded policies consists in deriving the right information constraint(s) by building the proof on Shannon theory for the problem of multi-source coding with distortion over multi-user channels wide side information and then to use this constraint to find an information-constrained maximum of the payoff (common payoff case) or the set of Nash equilibrium points which are compatible with the constraint (non-cooperative game case). As a key observation of this paper, the observation structure of a multi-person decision-making problem corresponds in fact to a multiuser channel. Therefore, multi-terminal Shannon theory is not only relevant for pure communication problems but also for any multi-person decision-making problem. The above observation also opens new challenges for Shannon-theorists since decision-making problems define new communication scenarios.

\appendices

\section{Achievable empirical coordination and implementability}
\label{sec:proof-prop-empirical-implementable}

Assume $\ol{Q}$ is an achievable empirical coordination. Then,
for any $\epsilon>0$,
    \begin{align}
      ||\mathbb{E}(\mathrm{T}_{X^N})-\ol{Q}||_1
      &\leq \mathbb{E}(||\mathrm{T}_{X^N}-\ol{Q}||_1)\\
      &= \mathbb{E}(||\mathrm{T}_{X^N}-\ol{Q}||_1|\;
      ||\mathrm{T}_{X^N}-\ol{Q}||_1\geq\epsilon)\mathrm{P}(||\mathrm{T}_{X^N}-\ol{Q}||_1
      \geq\epsilon)\nonumber \\
      &\phantom{=}+\mathbb{E}(||\mathrm{T}_{X^N}-\ol{Q}||_1|\; ||\mathrm{T}_{X^N}-\ol{Q}||_1<\epsilon)\mathrm{P}(||\mathrm{T}_{X^N}-
      \ol{Q}||_1<\epsilon)\\
      &\leq 2\mathrm{P}(||\mathrm{T}_{X^N}-\ol{Q}||_1\geq\epsilon)
      + \epsilon.
    \end{align}
Hence, $\forall \epsilon>0\quad \lim_{N\rightarrow\infty}||\mathbb{E}(\mathrm{T}_{X^N})-\ol{Q}||_1\leq
 \epsilon$, which means that $\ol{Q}$ is implementable.

\section{Lemmas used in proof of Theorem~\ref{thm:achiev-GP}}
\label{app:theo-GP}

\subsection{Proof of Lemma~\ref{lmc:bound_norm_1}}
Recall that $N=\alpha m+(B-1)m$ with our coding scheme. Note that
\begin{align}
  &  \Vert \mathrm{T}_{\ul{x}_0^N\ul{x}_1^N\ul{x}_2^N}-\overline{Q}\Vert_1\nonumber\\
  &\phantom{=}=\sum_{x_0,x_1,x_2}\left\vert\sum_{n=1}^N\frac{1}{N}\mathbbm{1}_{\left\{(x_{0,n},x_{1,n},x_{2,n})=(x_0,x_1,x_2)\right\}}-\ol{Q}(x_0,x_1,x_2)\right\vert\\
  &\phantom{=}= \sum_{x_0,x_1,x_2}\left\vert\sum_{n=1}^{\alpha m}\frac{1}{N}\mathbbm{1}_{\left\{(x^{(1)}_{0,n},x^{(1)}_{1,n},x^{(1)}_{2,n})=(x_0,x_1,x_2)\right\}} +\sum_{b=2}^{B}\sum_{n=1}^{m}\frac{1}{N}\mathbbm{1}_{\left\{(x^{(b)}_{0,n},x^{(b)}_{1,n},x^{(b)}_{2,n})=(x_0,x_1,x_2)\right\}}-\ol{Q}(x_0,x_1,x_2)\right\vert\\
&\phantom{=}\leq \sum_{x_0,x_1,x_2}\left\vert\sum_{n=1}^{\alpha m}\frac{1}{N}\mathbbm{1}_{\left\{(x^{(1)}_{0,n},x^{(1)}_{1,n},x^{(1)}_{2,n})=(x_0,x_1,x_2)\right\}} -\frac{\alpha m}{N}\ol{Q}(x_0,x_1,x_2)\right\vert \nonumber \\
  &\phantom{=}\phantom{======}+ \sum_{b=2}^{B}\sum_{x_0,x_1,x_2}\left\vert\sum_{n=1}^{m}\frac{1}{N}\mathbbm{1}_{\left\{(x^{(b)}_{0,n},x^{(b)}_{1,n},x^{(b)}_{2,n})=(x_0,x_1,x_2)\right\}}-\frac{N-\alpha m}{N(B-1)}\ol{Q}(x_0,x_1,x_2)\right\vert \label{eq:subadd}\\
  &\phantom{=}= \frac{\alpha m}{N}\sum_{x_0,x_1,x_2}\left\vert\sum_{n=1}^{\alpha m}\frac{1}{\alpha m}\mathbbm{1}_{\left\{(x^{(1)}_{0,n},x^{(1)}_{1,n},x^{(1)}_{2,n})=(x_0,x_1,x_2)\right\}} -\ol{Q}(x_0,x_1,x_2)\right\vert \nonumber \\
  &\phantom{=}\phantom{======}+ \frac{m}{N}\sum_{b=2}^{B}\sum_{x_0,x_1,x_2}\left\vert\sum_{n=1}^{m}\frac{1}{m}\mathbbm{1}_{\left\{(x^{(b)}_{0,n},x^{(b)}_{1,n},x^{(b)}_{2,n})=(x_0,x_1,x_2)\right\}}-\ol{Q}(x_0,x_1,x_2)\right\vert \\
  &\phantom{=}\leq \frac{2\alpha}{B-1+\alpha}+\frac{1}{B-1} \sum_{b=2}^{B} \Vert \mathrm{T}_{\ul{x}_0^{(b)}\ul{x}_1^{(b)}\ul{x}_2^{(b)}}-\overline{Q}\Vert_1, \label{eq:lasteq-bound}
\end{align}
 where \eqref{eq:subadd} follows from the triangle inequality and \eqref{eq:lasteq-bound} follows from $\frac{m}{N}\leq \frac{1}{B-1}$ and $\Vert P-Q\Vert_1\leq 2$ for $(P,Q)\in\Delta^2(\calX)$.

\subsection{Proof of Lemma~\ref{lm:lemma_case_2_event_0}}
The proof of this result is similar to that of the following Lemmas, using the ``uncoordinated'' distribution $\widehat{Q}$ instead of $Q$. For brevity, we omit the proof.

\subsection{Proof of Lemma~\ref{lm:lemma_case_2_event_1}}
Note that
\begin{align}
  \mathbb{E}\left(\mathrm{P}(E_1^{(b)})\right) = \mathrm{P}\left((\ul{X}_0^{(b+1)},\ul{X}_2(i'_b))\notin\calT^n_{\epsilon_1}(Q_{X_0X_2})\text{ for all } i'_b\in[1:2^{nR}]\right)
\end{align}
with $\ul{X}_0^{(b)}$ distributed according to $\prod_{n=1}^n Q_{X_0}$, and $\ul{X}_2(i_b)$ independent of each other distributed according to $\prod_{n=1}^m Q_{X_2}$. Hence, the result directly follows from the covering lemma~\cite[Lemma 3.3]{NetworkInformationTheory}, with the following choice of parameters.
\begin{align*}
  U&\leftarrow\emptyset\qquad X^n\leftarrow \ul{X}_0^{(b+1)}\qquad \hat{X}^n(m)\text{ with }m\in\mathcal{A}\leftarrow \ul{X}_2(i'_b)\text{ with }i'_b\in[1:2^{nR}].
\end{align*}
\subsection{Proof of Lemma~\ref{lm:lemma_case_2_event_2}}
Note that
\begin{align}
  \mathbb{E}\left(\mathrm{P}(E_2^{(b)}|E_1^{(b-1)c})\right)
=\mathrm{P}\left((\ul{U}(I_b,j'_b),\ul{X}^{(b)}_0,\ul{X}_2(I_{b-1}))\notin\calT^n_{\epsilon_2}(Q_{UX_0X_2}) \text{ for all } j'_b\in[1:2^{nR'}]\right),
\end{align}
where $(\ul{X}_0^{(b)},\ul{X}_2(I_{b-1}))\in\calT^n_{\epsilon_1}(Q_{X_0X_2})$, and $\ul{U}(I_b,j)$ are generated independently of each other according to $\prod_{n=1}^n Q_U$. Hence, the result follows directly from the covering lemma~\cite[Lemma 3.3]{NetworkInformationTheory} with the following choice of parameters.
\begin{align*}
  U\leftarrow \emptyset\qquad X^n\leftarrow \ul{X}_0^{(b)},\ul{X}_2(I_{b-1})\qquad \hat{X}^n(m)\text{ with $m\in\mathcal{A}$}\leftarrow \ul{U}(I_b,j'_b)\text{ with $j'_b\in[1:2^{nR'}]$}.
\end{align*}

\subsection{Proof of Lemma~\ref{lm:lemma_case_2_event_3}}

The result follows from a careful application of the conditional typicality lemma. Note that conditioning on $E_{2}^{(b)c}$ ensures that $(\ul{u}(I_b,J_b),\ul{X}_0^{(b)},\ul{x}_2(I_{b-1}))\in\calT_{\epsilon_2}^n(Q_{UX_0X_2})$, while conditioning on $E_2^{(b-1)c}\cap E_3^{(b-1)c}\cap E_4^{(b-1)c}\cap E_0^{c}$ guarantees that $\widehat{I}_{b-1}={I}_{b-1}$. Consequently,
\begin{align}
&\mathrm{P}\left(E_3^{(b)}\cap E_2^{(b)c}\cap E_2^{(b-1)c}\cap E_3^{(b-1)c}\cap E_4^{(b-1)c}\cap E_0^c\right) \nonumber \\
&\phantom{=}\leq \mathrm{P}\left( (\ul{u}(I_b,J_b),\ul{X}_0^{(b)},\ul{X}_1^{(b)},\ul{x}_2(\widehat{I}_{b-1}),\ul{Y}^{(b)} ) \notin\calT_{\epsilon_3}^n(Q_{X_0X_2X_1Y})\right. \nonumber\\
&\phantom{===================}\left.\vert (\ul{u}(I_b,J_b),\ul{X}_0^{(b)},\ul{x}_2(I_{b-1}))\in\calT_{\epsilon_2}^n(Q_{UX_0X_2})\cap \widehat{I}_{b-1}=I_{b-1}\right) \\
&\phantom{=}=\sum_{i_{b-1},i_{b},j_b,\ul{x}_0^{(b)}}p_{\widehat{I}_{b-1},I_b,J_b,\ul{X}_0^{(b)}}(i_{b-1},i_b,j_b,\ul{x}_0)\sum_{\ul{y}}\sum_{\ul{x}_1^{(b)}}\Gamma(\ul{y}|\ul{x}_0^{(b)},\ul{x}^{(b)}_1,\ul{x}_2(i_{b-1}))\nonumber\\
&\phantom{==========}  Q(\ul{x}^{(b)}_1|\ul{u}(i_b,j_b),\ul{x}_0^{(b)}, \ul{x}_2(i_{b-1}))\mathbbm{1}_{\left\{(\ul{u}(i_b,j_b),\ul{x}_0^{(b)},\ul{x}_1^{(b)},\ul{x}^{(b)}_2(i_{b-1}),\ul{y})\notin\calT_{\epsilon_3}^n(Q_{UX_0X_1X_2Y})\right\}},
\end{align}
where $p_{\widehat{I}_{b-1},I_b,J_b,\ul{X}_0^{(b)}}$ denotes the joint distribution of ${\widehat{I}_{b-1},I_b,J_b,\ul{X}_0^{(b)}}$ given $(\ul{u}(I_b,J_b),\ul{X}_0^{(b)},\ul{x}_2(I_{b-1}))\in\calT_{\epsilon_2}^n(Q_{UX_0X_2})$ and $\widehat{I}_{b-1}=I_{b-1}$. Upon taking the average over the random codebooks, we obtain
\begin{multline}
  \mathbb{E}\left(\mathrm{P}\left(E_3^{(b)}|E_1^{(b-1)c}\cap E_2^{(b-1)c}\cap E_3^{(b-1)c}\cap E_4^{(b-1)c}\cap E_0^c\right)\right)\\
  =\sum_{i_{b-1},i_{b},j_b,\ul{x}_0^{(b)}}\mathbb{E}\left(p_{\widehat{I}_{b-1},I_b,J_b,\ul{X}_0^{(b)}}(i_{b-1},i_b,j_b,\ul{x}_0^{(b)}) \mathbb{E}\left(\sum_{\ul{y}}\sum_{\ul{x}_1^{(b)}}\Gamma(\ul{y}|\ul{x}_0^{(b)},\ul{x}^{(b)}_1,\ul{X}_2(i_{b-1}))\right.\right.\\
    \left.\left.  Q(\ul{x}^{(b)}_1|\ul{U}(i_b,j_b),\ul{x}_0^{(b)}, \ul{X}_2(i_{b-1}))\mathbbm{1}_{\left\{(\ul{u}(i_b,j_b),\ul{x}_0^{(b)},\ul{x}_1^{(b)},\ul{X}^{(b)}_2(i_{b-1}),\ul{y})\notin\calT_{\epsilon_3}^n(Q_{UX_0X_1X_2Y}\right\}}\big\vert \ul{U}(i_b,j_b),\ul{X}_2(i_{b-1})) \right)\right).
\end{multline}
The inner expectation is therefore
\begin{align}
  \mathrm{P}\left((\ul{u},\ul{x}_0,\ul{X}_1,\ul{x}_2,\ul{Y})\notin\calT_{\epsilon_3}^n(Q_{UX_0X_1X_2Y})\right),\label{eq:inner_expect}
\end{align}
where $(\ul{X}_1,\ul{Y})$ is distributed according to $\prod_{n=1}^m Q_{X_1|UX_0X_2}(x_{1,n}|u_{1,n},x_{2,n}x_{0,n})\Gamma(y_n|x_{0,n} x_{1,n}x_{2,n})$ given $(\ul{u},\ul{x}_0,\ul{x}_2)\in\calT_{\epsilon_2}^n(Q_{X_0X_2})$. The conditional typicality lemma~\cite[p. 27]{NetworkInformationTheory} guarantees that~\eqref{eq:inner_expect} vanishes as $n\rightarrow\infty$.

\subsection{Proof of Lemma~\ref{lm:lemma_case_2_event_4}}

The result is a consequence of the packing lemma. Note that
\begin{align}
&  \mathbb{E}\left(\mathrm{P}\left(E_4^{(b)}\cap E_2^{(b-1)c} \cap E_3^{(b-1)c} \cap E_4^{(b-1)c} \cap E_0^c\right)\right)\nonumber\\
&\phantom{===}\leq \mathbb{E}\left(\mathrm{P}\left(E_4^{(b)}|\widehat{I}_{b-1}={I}_{b-1}\right)\right)\\
  &\phantom{===}=\mathrm{P}\left((\ul{U}(i'_b,j'_b),\ul{X}_2(\widehat{I}_{b-1}),\ul{Y}^{(b)})\in \calT^n_{\epsilon_2}(Q)\text{ for some } (i'_b,j'_b)\text{ with  }i'_b \neq I_b|\widehat{I}_{b-1}={I}_{b-1}\right)
\end{align}
since conditioning on $E_2^{(b-1)c}\cap E_3^{(b-1)c}\cap E_4^{(b-1)c}\cap E_0^{c}$ guarantees that $\widehat{I}_{b-1}={I}_{b-1}$. Since every $\ul{U}(i'_b,j'_b)$ with $i'_b\neq I_b$ is generated according to $\prod_{i=1}^mp_U(u_i)$ independently of $(\ul{Y},\ul{X}_2(I_{b-1}))$, byt the packing lemma~\cite[Lemma 3.1]{NetworkInformationTheory} we know that if $R+R'<I_Q(U;Y,X_2)-\delta(\epsilon_3)$ then
\begin{align*}
\mathrm{P}\left((\ul{U}(i'_b,j'_b),\ul{X}_2(\widehat{I}_{b-1}),\ul{Y}^{(b)})\in \calT^n_{\epsilon_2}(Q)\text{ for some } (i'_b,j'_b)\text{ with  }i'_b \neq I_b|\widehat{I}_{b-1}={I}_{b-1}\right)
\end{align*}
vanishes as $n\rightarrow\infty$.

\section{Proof of Lemma \ref{lemma:convexity-of-phi}}
\label{sec:proof-lemma-refl}

The function $\Phi = \Phi^{\mathrm{I}}$ can be rewritten as $\Phi(Q) = H_Q(X_0) - H_Q(Y,X_0 | X_2) + H_Q(Y|X_0,X_2,X_1)$. The first term $H_Q(X_0) = -\sum_{x_0}
\rho_0(x_0) \log \rho_0(x_0)$ is a constant w.r.t. $Q$. The third term is linear w.r.t. $Q$ since,  with $\Gamma$ fixed,
\begin{align}H_Q(Y|X_0,X_2,X_1) &= - \sum_{x_0,x_1,x_2,y} Q(x_0,x_1,x_2,y) \log \Gamma(y|x_0,x_1,x_2).
\end{align}
It is therefore sufficient to prove that $H_Q(Y,X_0 | X_2)$ is
concave. Let $\lambda_1 \in [0,1]$, $\lambda_2 = 1 -
\lambda_1$, $(Q_1,Q_2) \in  \Delta^2(\mc{X}_0 \times
\mc{X}_1 \times \mc{X}_2 \times \mc{Y})$ and
$Q=\lambda_1 Q_1 + \lambda_2 Q_2$. We have that:
\begin{align}
H_Q(Y,X_0 | X_2) &=  - \sum_{x_0,x_2,y} \bigg( \sum_{x_1,i} \lambda_i Q_i(x_0,x_1,x_2,y) \bigg)\log \left[ \frac{\sum_{x_1,i} \lambda_i Q_i(x_0,x_1,x_2,y)}{ \sum_{i} \lambda_i Q_i(x_2)} \right] \\\displaybreak[0]
  &= - \sum_{x_0,x_2,y} \bigg(\sum_i \lambda_i  \sum_{x_1} Q_i(x_0,x_1,x_2,y) \bigg) \log \left[ \frac{\sum_i \lambda_i  \sum_{x_1} Q_i(x_0,x_1,x_2,y)}{ \sum_{i} \lambda_i Q_i(x_2)} \right] \\
  &> - \sum_i \lambda_i \sum_{x_0,x_2,y} \bigg(  \sum_{x_1} Q_i(x_0,x_1,x_2,y) \bigg) \log \left[ \frac{\lambda_i  \sum_{x_1} Q_i(x_0,x_1,x_2,y)}{\lambda_i Q_i(x_2)} \right] \\
&= - \sum_i \lambda_i \sum_{x_0,x_2,y} \bigg(  \sum_{x_1} Q_i(x_0,x_1,x_2,y) \bigg) \log \left[ \frac{\sum_{x_1} Q_i(x_0,x_1,x_2,y)}{ Q_i(x_2)} \right] \\
&= \lambda_1 H_{Q_1}(Y,X_0 | X_2) + \lambda_2 H_{Q_2}(Y,X_0 | X_2)
\end{align}
where the strict inequality comes from the log-sum inequality~\cite{Cover:2006:EIT:1146355}, with:
\begin{equation} a_i = \lambda_i Q_i(x_0,x_1,x_2)
\end{equation}
and
\begin{equation} b_i = \lambda_i Q_i(x_2)
\end{equation}
for $i\in\{1,2\}$ and for all $(x_0,x_1,x_2)$
 such that $Q_i(x_2)>0$.

\section{Expression of the expected payoff $W_3$ which allows
the best mappings $f_{\mathrm{S}}$ and $g_{\mathrm{S}}$ to be selected}
\label{app:expected-payoff}

We introduce the composite mapping $\chi_{\mathrm{S}} = g_{\mathrm{S}} \circ f_{\mathrm{S}}$. For the channel code defined in Section \ref{sec:coord-scheme}, the expected payoff only depends on the mappings $f_{\mathrm{S}}$ and $\chi_{\mathrm{S}}$, and we denote it by $W_3(f_{\mathrm{S}}, \chi_{\mathrm{S}})$. The following notation is used below: $\chi_{\mathrm{S}}(x_0^3) = (\chi_{1}(x_0^3), \chi_{2}(x_0^3), \chi_{3}(x_0^3))$ to stand for the three components of $\chi_{\mathrm{S}}$.

It can be checked that 
\begin{equation}
W_3(f_{\mathrm{S}}, \chi_{\mathrm{S}}) = \sum_{(i,j,k) \in
\{0,1\}^3 } W_{ijk}(f_{\mathrm{S}}, \chi_{\mathrm{S}})
\end{equation}
where:
\begin{align} W_{000}(f_{\mathrm{S}}, \chi_{\mathrm{S}}) &=
\underbrace{p^3}_{\mathrm{P}[x_0^3[b]= (0,0,0)]}
\bigg[ \underbrace{P_0(f_{\mathrm{S}}) }_{\mathrm{P}[i_{b+1} = m_0]} \times\underbrace{\left(\max_{x_1^3 \in \{0,1\}^3} \{ \sum_{n=1}^3 w(0, x_{1,n}, \chi_n((0,0,0))) \}\right)}_{\textrm{Best payoff}}  \nonumber \\
& + (1-P_0(f_{\mathrm{S}}) ) \times \underbrace{\left(\max_{x_1^3 \in \{0,1\}^3\backslash \mc{X}_{1}^{000}} \{ \sum_{n=1}^3 w(0, x_{1,n},  \chi_n((0,0,0))) \}\right)}_{\textrm{Second best payoff}}\bigg],
  \end{align}
\begin{align} \mc{X}_{1}^{000} = \arg\max_{x_1^3 \in \{0,1\}^3} \{
 \sum_{n=1}^3 w(0, x_{1,n}, \chi_n((0,0,0))) \},\end{align}
\begin{align} W_{001}(f_{\mathrm{S}}, \chi_{\mathrm{S}})&= p^2(1-p)  \bigg[ P_0(f_{\mathrm{S}}) \times \Bigg(\max_{x_1^3 \in \{0,1\}^3} \Big\{   w(0,x_{1,1}, \chi_1((0,0,1))) + w(0,x_{1,2}, \chi_2((0,0,1))) \nonumber \\
& \phantom{===========} + w(1, x_{1,3}, \chi_3((0,0,1))) \Big\}\Bigg) \nonumber \\
& +  (1-P_0(f_{\mathrm{S}})) \times \Big(\max_{x_1^3 \in \{0,1\}^3\backslash \mc{X}_{1}^{001} } \Big\{ w(0,x_{1,1},\chi_1((0,0,1))) + w(0,x_{1,2}, \chi_2((0,0,1))) \nonumber \\
      & \phantom{==========} + w(1, x_{1,3}, \chi_3((0,0,1)) ) \Big\}\Big) \bigg],
      \end{align}
\begin{align} & \mc{X}_{1}^{001}= \arg\max_{x_1^3 \in \{0,1\}^3} \{ w(0,x_{1,1},\chi_1((0,0,1)) ) + w(0,x_{1,2},  \chi_2((0,0,1))) + w(1,x_{1,3}, \chi_3((0,0,1))  ) \},\end{align}
      \begin{align} W_{010}(f_{\mathrm{S}},\chi_{\mathrm{S}}) &= p^2 (1-p)  \bigg[ P_0(f_{\mathrm{S}}) \times  \Bigg(\max_{x_1^3 \in \{0,1\}^3}  \Big\{ w(0,x_{1,1}, \chi_1(0,1,0)) + w(1, x_{1,2}, \chi_2(0,1,0)) \nonumber \\
 & \phantom{===========} + w(0, x_{1,3},  \chi_3(0,1,0)  ) \Big\}\Bigg) \nonumber \\
      & +  (1-P_0(f_{\mathrm{S}})) \times \Big(\max_{x_1^3 \in \{0,1\}^3\backslash
       \mc{X}_1^{010}}\Big\{ w(0, x_{1,1}, \chi_1(0,1,0) ) + w(1, x_{1,2},
        \chi_2(0,1,0) ) + \nonumber \\
      &\phantom{========} w(0, x_{1,3}, \chi_3(0,1,0)  ) \Big\}\Big) \bigg],
      \end{align}
       \begin{align} \mc{X}_1^{010} = \arg\max_{x_1^3 \in \{0,1\}^3} \{ w(0, x_{1,1}, \chi_1(0,1,0)) + w(1, x_{1,2},   \chi_2(0,1,0) ) + w(0, x_{1,3}, \chi_3(0,1,0) )\},\end{align}
 \begin{align} W_{100}(f_{\mathrm{S}},
      \chi_{\mathrm{S}}) & = p^2(1-p)  \bigg[ P_0(f_{\mathrm{S}}) \times \Bigg(\max_{x_1^3 \in \{0,1\}^3} \Big \{  w(1, x_{1,1}, \chi_1((1,0,0))  ) +   w(0, x_{1,2},  \chi_2((1,0,0)) ) \nonumber \\
       & \phantom{========} +    w(0, x_{1,3}, \chi_3((1,0,0))  ) \}\Bigg) \nonumber \\
      & +  (1-P_0(f_{\mathrm{S}})) \times
       \Big(\max_{x_1^3 \in \{0,1\}^3\backslash \mc{X}_1^{100}}
        \{ w(1, x_{1,1}, \chi_1((1,0,0)) ) + w(0, x_{1,2}, \chi_2((1,0,0)) ) \nonumber \\
      &\phantom{========} +  w(0, x_{1,3}, \chi_3((1,0,0))   ) \}\Big) \bigg],
      \end{align}
\begin{align} \mc{X}_1^{100} = \arg\max_{x_1^3 \in \{0,1\}^3} \{
w(1, x_{1,1}, \chi_1((1,0,0))) + w(0, x_{1,2}, \chi_2((1,0,0))  ) +  w(0, x_{1,3}, \chi_3((1,0,0))) \},\end{align}
 \begin{align} W_{111}(f_{\mathrm{S}},
      \chi_{\mathrm{S}}) &= (1-p)^3 \bigg[ P_0(f_{\mathrm{S}}) \times \Bigg(\max_{x_1^3 \in \{0,1\}^3} \Big\{ \sum_{n=1}^3   w(1, x_{1,n}, \chi_n((1,1,1)) ) \big\}\Bigg) \nonumber \\
  & + (1-P_0(f_{\mathrm{S}}))
  \times \Bigg(\max_{x_1^3 \in
   \{0,1\}^3\backslash \mc{X}_1^{111}}\Big\{ \sum_{n=1}^3 w(1, x_{1,n}, \chi_n((1,1,1)))
    \Big\}\Bigg) \bigg],
  \end{align}
  \begin{align} \mc{X}_1^{111} =
\arg\max_{x_1^3 \in \{0,1\}^3} &\{ \sum_{n=1}^3
 w(1, x_{1,n}, \chi_n((1,1,1)) ) \},
  \end{align}
   \begin{align} W_{011}(f_{\mathrm{S}} \chi_{\mathrm{S}}) &= p(1-p)^2  \Bigg[ P_0(f_{\mathrm{S}}) \times \bigg(\max_{x_1^3 \in \{0,1\}^3} \Big\{ w(0, x_{1,1}, \chi_1((0,1,1))) + w(1, x_{1,2}, \chi_2((0,1,1)) ) \nonumber \\
 &\phantom{==========}+ w(1, x_{1,3}, \chi_3((0,1,1))) \Big\}\bigg) \nonumber \\
      & +  (1-P_0((f_{\mathrm{S}})))\times  \bigg(\max_{x_1^3
       \in \{0,1\}^3\backslash \mc{X}_1^{011}}\Big\{ w(0, x_{1,1},
       \chi_1((0,1,1))) + w(1, x_{1,2}, \chi_2((0,1,1)))+ \nonumber \\
      & \phantom{=========} w(1, x_{1,3}, \chi_3((0,1,1))) \Big\}\bigg) \Bigg],
      \end{align}
\begin{align} \mc{X}_1^{011} =\arg\max_{x_1^3 \in \{0,1\}^3} \{ w(0, x_{1,1}, \chi_1((0,1,1))) + w(1, x_{1,2}, \chi_2((0,1,1))) + w(1, x_{1,3}, \chi_3((0,1,1))) \},
\end{align}
  \begin{align}  W_{101}(f_{\mathrm{S}},\chi_{\mathrm{S}}) &= p(1-p)^2  \Bigg[ P_0(f_{\mathrm{S}}) \times \bigg(\max_{x_1^3 \in \{0,1\}^3} \Big\{ w(1, x_{1,1}, \chi_1((1,0,1))) + w(0, x_{1,2},  \chi_2((1,0,1)) ) \nonumber \\
      &\phantom{==========} + w(1, x_{1,3}, \chi_3((1,0,1))  )  \Big\}\bigg) \nonumber \\
      & +  (1-P_0(f_{\mathrm{S}})) \times  \bigg(\max_{x_1^3 \in \{0,1\}^3\backslash
      \mc{X}_1^{101}} \Big\{ w(1, x_{1,1},  \chi_1((1,0,1)) ) + w(0, x_{1,2}, \chi_2((1,0,1))) + \nonumber \\
      &\phantom{===========}  w(1, x_{1,3},  \chi_3((1,0,1))) \Big\}\bigg) \Bigg],
      \end{align}
     \begin{align} \mc{X}_1^{101} =\arg\max_{x_1^3 \in \{0,1\}^3} \Big\{ w(1, x_{1,1}, \chi_1((1,0,1))  ) +w(0, x_{1,2}, \chi_2((1,0,1))) + w(1, x_{1,3}, \chi_3((1,0,1))  ) \Big\},\end{align}
  \begin{align} W_{110}(f_{\mathrm{S}},
      \chi_{\mathrm{S}})&= p(1-p)^2  \Bigg[ P_0(f_{\mathrm{S}}) \times \bigg(\max_{x_1^3 \in \{0,1\}^3}\Big\{ w(1, x_{1,1}, \chi_1((1,1,0))) + w(1, x_{1,2}, \chi_2((1,1,0)))  \nonumber \\
      &\phantom{============}+ w(0, x_{1,3}, \chi_3((1,1,0)))  \Big\}\bigg) \nonumber \\
      & +  (1- P_0(f_{\mathrm{S}})) \times \bigg(\max_{x_1^3 \in \{0,1\}^3
      \backslash \mc{X}_1^{110}} \Big\{ w(1, x_{1,1}, \chi_1((1,1,0)) ) + w(1, x_{1,2}, \chi_2((1,1,0))) \nonumber \\
      &\phantom{============} + w(0, x_{1,3}, \chi_3((1,1,0))) \Big\}\bigg) \Bigg],  \end{align}
\begin{align}  \mc{X}_1^{110}= \arg\max_{x_1^3 \in \{0,1\}^3} \{ w(1, x_{1,1},  \chi_1((1,1,0)) ) + w(1, x_{1,2}, \chi_2((1,1,0))  ) + w(0, x_{1,3}, \chi_3((1,1,0)))\}. \end{align}

In the case of Table \ref{Tab:Source code}, $P_0(f_{\mathrm{S}})$ is given by
 \begin{align} P_0(f_{\mathrm{S}}) & =
     \mathrm{P}\big[ x_0^3=(0,0,0) \big]+ \mathrm{P}\big[ x_0^3=(0,0,1) \big] + \mathrm{P}\big[ x_0^3=(0,1,0) \big]+ \mathrm{P}\big[ x_0^3=(0,1,1) \big] \nonumber \\
    &+\mathrm{P}\big[ x_0^3=(1,0,0) \big] +\mathrm{P}\big[ x_0^3=(1,0,1) \big]\\
    &= p(2- p)\\
    & \stackrel{p=\frac{1}{2}}{=} \frac{3}{4}.
    \end{align}

\bibliographystyle{IEEEtran}
\bibliography{bib-TIT-v6}

\end{document}